\newtheorem{theorem}{Theorem}
\newtheorem{corollary}{Corollary}
\newlength{\halfpagewidth}
\def\ScaleIfNeeded{%
\ifdim\Gin@nat@width>\linewidth \linewidth \else \Gin@nat@width
\fi } \makeatother
\begin{document}
%

\small{ \title{{Secrecy and Energy Efficiency in Massive MIMO Aided Heterogeneous C-RAN: A New Look at Interference }}}

\author{Lifeng Wang,~\IEEEmembership{Member,~IEEE,}  Kai-Kit Wong,~\IEEEmembership{Fellow,~IEEE,} Maged Elkashlan,~\IEEEmembership{Member,~IEEE,}  Arumugam Nallanathan,~\IEEEmembership{Senior Member,~IEEE,}  and Sangarapillai Lambotharan,~\IEEEmembership{Senior Member,~IEEE}
\thanks{ L. Wang, and K.-K. Wong are with the Department of Electronic and Electrical Engineering, University College London (UCL), London, UK (Email: $\rm\{lifeng.wang, kai$-$\rm kit.wong\}@ucl.ac.uk$).}
\thanks{M. Elkashlan is with  the School of Electronic Engineering and Computer Science, Queen Mary University of London, London, UK. (Email: $\rm\{maged.elkashlan\}@qmul.ac.uk$)}
\thanks{A. Nallanathan is with the Department of Informatics, King's College London, London, UK (Email: $\rm\{arumugam.nallanathan\}@kcl.ac.uk$)}
\thanks{S. Lambotharan is with School of Electronic, Electrical and System Engineering at Loughborough University, Loughborough Leicestershire, UK (Email: $\rm\{s.lambotharan\}@lboro.ac.uk$)}
}

\maketitle
\vspace{-1.1 cm}
\begin{abstract}
In this paper, we investigate the potential benefits of the massive multiple-input multiple-output (MIMO) enabled heterogeneous cloud radio access network (C-RAN) in terms of the secrecy and energy efficiency (EE). In this network, both remote radio heads (RRHs) and massive MIMO macrocell base stations (BSs) are deployed and soft fractional frequency reuse (S-FFR) is adopted to mitigate the inter-tier interference. We first examine the physical layer security by deriving the area ergodic secrecy rate and secrecy outage probability. Our results reveal that the use of massive MIMO  and C-RAN can greatly improve the secrecy performance. For C-RAN, a large number of RRHs achieves high area ergodic secrecy rate and low secrecy outage probability, due to its powerful interference management. We find that for massive MIMO aided macrocells, having more antennas and serving more users improves secrecy performance. Then we derive the EE of the heterogeneous C-RAN, illustrating that increasing the number of RRHs significantly enhances the network EE. Furthermore, it is indicated that {allocating more radio resources} to the RRHs can linearly increase the EE of RRH tier and improve the network EE without affecting the EE of the macrocells.
\end{abstract}
\vspace{-0.7 cm}
\begin{IEEEkeywords}
Heterogeneous cloud radio access network (C-RAN), massive MIMO, soft fractional frequency reuse (S-FFR), physical layer security, energy efficiency.
\end{IEEEkeywords}

\newpage
\section{Introduction}
As a new mobile network architecture consisting of remote radio heads (RRHs) and baseband units (BBUs), cloud radio access network (C-RAN) can  deal with large-scale control/data processing much more efficiently. The rationale behind this is that baseband processing is centralized and coordinated among sites in the centralized BBU pool, which reduces the capital expenditure (CAPEX) and operating expenditure (OPEX) of the networks~\cite{checko_2015}. Massive multiple-input multiple-output (MIMO)  is another key technology that promises outstanding spectral efficiency (SE) and energy efficiency (EE). In massive MIMO antenna systems, base stations (BSs) are equipped with large antenna arrays to support a large number of users in the same time-frequency domain~\cite{ngo2013energy}. Among other emerging technologies such as {device-to-device communications}, full duplex radios, and millimeter wave, etc.,  C-RAN and massive MIMO are identified as promising 5G technologies~\cite{Jeffrey_5G,HE15,dantong-survey}.

Driven by its high SE and EE, C-RAN has recently received tremendous attention from both industry and academia~\cite{mugen_poor_2015,ZhiguoDing_2013}. For instance, a group of single-antenna RRHs were considered to form a distributed antenna array, and two downlink transmission strategies namely best RRH selection and distributed beamforming were examined in terms of outage probability in \cite{ZhiguoDing_2013}. Most recently in \cite{Zaidi_2015}, user-centric association in a multi-tier C-RAN was proposed, in which the RRH that had the best signal-to-noise ratio (SNR) was scheduled to serve the user. Compared to  \cite{ZhiguoDing_2013}, downlink transmission in the C-RAN with a group of multi-antenna RRHs was investigated in \cite{Khan_2015}. In the work of \cite{Khan_2015}, maximal ratio transmission and transmit antenna selection were adopted at the RRHs, and the outage probability was derived by considering several transmission schemes such as RRH selection and distributed beamforming.

Heterogeneous C-RAN is a new paradigm by integrating cloud computing with heterogeneous networks (HetNets)~\cite{Mugen_Peng_2014_mag,Peng_M_2015}. In heterogeneous C-RAN, severe inter-tier interference is coordinated for the enhancement of SE and EE.  The architecture of heterogenous C-RAN with massive MIMO is envisioned as an appealing solution, since none of these techniques can solely achieve the 5G targets~\cite{HE15,Mugen_Peng_2014_mag}. In \cite{Mugen_Peng_2014_mag}, the opportunities and challenges for heterogenous C-RAN with massive MIMO were discussed, and it was mentioned that the proper densities of the massive MIMO macrocell BSs (MBSs) and RRHs in the networks should be addressed. While the significance of  heterogenous C-RAN with massive MIMO has been highlighted in the prior works~\cite{mugen_poor_2015,Mugen_Peng_2014_mag}, more research efforts should be devoted to proper characterization of this combination.

Although C-RAN can effectively mitigate the inter-RRH interference by using interference management techniques such as coordinated multi-point (CoMP),  the inter-tier interference between the RRHs and MBSs may be problematic in the heterogeneous C-RAN,  due to the limited radio resources. { Soft fractional frequency reuse (S-FFR) is viewed as an efficient inter-tier interference coordination approach. In \cite{Peng_M_2015}, S-FFR was considered in the heterogeneous C-RAN to both mitigate the inter-tier interference and enhance the spectrum efficiency.}

Recent developments have showed physical layer security as an innovative solution for safeguarding wireless networks. The rationale behind this  is to exploit the randomness inherent in wireless channels such as fading or artificial noise, etc. in order to transmit information confidentially~\cite{LunDong}. In contrast to traditional cryptographic approaches, physical layer security based techniques do not rely on computational complexity and have very good scalability~\cite{Lifeng_commag}. The emergence of massive MIMO also introduces new opportunities for providing physical layer security, e.g.,~\cite{junzhu2014,Lifeng_commag,GanZheng2015}. In particular, in \cite{junzhu2014}, matched filter precoding and artificial noise generation were considered to secure downlink transmission in a multicell massive MIMO system in the presence of an eavesdropper. Subsequently in~\cite{GanZheng2015}, passive eavesdropping and active attacks were investigated in massive MIMO systems with physical layer security, which illustrates that passive eavesdropping has little effect on the secrecy capacity for the case of considering only one single-antenna eavesdropper. While these recent contributions certainly laid a solid foundation in massive MIMO systems with physical layer security, {such a research area} is still far from being well understood. The research on physical layer security in the C-RAN is also in its infancy, and we believe it is a new highly rewarding candidate for physical layer security due to at least the following two crucial factors:
\begin{itemize}
\item Low-power RRHs are densely placed in C-RAN~\cite{checko_2015} so the distance between user and its serving RRH is short, which decreases the risk of information leakage.
\item The inter-RRH interference is mitigated in the C-RAN. As such, all other RRHs can act as ``friendly jammers'' to confound the eavesdroppers~\cite{GanZheng2011,huangjing,Chu2015}.
\end{itemize}
Thus, massive MIMO and C-RAN offer a wealth of opportunities at the physical layer to secure communication.

Motivated by the aforementioned background, in this paper, we explore the benefits of massive MIMO aided heterogeneous C-RAN by investigating its secrecy and EE performance.  We consider downlink transmission in a two-tier heterogeneous C-RAN, in which the RRHs co-exist with the massive MIMO aided macrocells. {To control the inter-tier interference to an acceptable level, S-FFR is used to allocate the radio resources appropriately.} Different from \cite{Peng_M_2015,junzhu2014,GanZheng2015}, in this paper, the RRHs and massive MIMO MBSs are spatially distributed under the framework of stochastic geometry. While \cite{ZhiguoDing_2013,Khan_2015} considered only one single user in the network with multiple RRHs around the user coverage area and evaluated the performance from the standpoint of the user, we analyze the secrecy and EE of the entire network. In summary, our contributions are that:
\begin{itemize}
\item We provide a tractable analytical framework to characterize the secrecy and EE performance of heterogeneous C-RAN aided by massive MIMO. Our analysis accounts for the key features of massive MIMO and C-RAN, i.e., large antenna arrays and simultaneously serving multiple users for massive MIMO, and large numbers of RRHs and inter-RRH interference mitigation for C-RAN.
\item We also study the area ergodic secrecy rate and secrecy outage probability in this network. Our results illustrate that accommodating more users by the massive MIMO empowered MBSs increases the area ergodic secrecy rate and decreases the secrecy outage probability, while it has negligible effect on the RRH's  performance. Deploying large numbers of RRHs increases the area ergodic secrecy rate and decreases the secrecy outage probability.
\item In addition, our results demonstrate that  the effect of S-FFR on the area ergodic secrecy rate of the network can be distinct depending on the RRH density. Moreover, the EE of the RRH tier linearly increases with their dedicated radio resources, and the network EE is improved by using more RRHs and  more radio resources to be allocated to the RRHs.
\end{itemize}

\section{System Descriptions}
\begin{figure}
    \begin{center}
        \includegraphics[width=3.5 in]{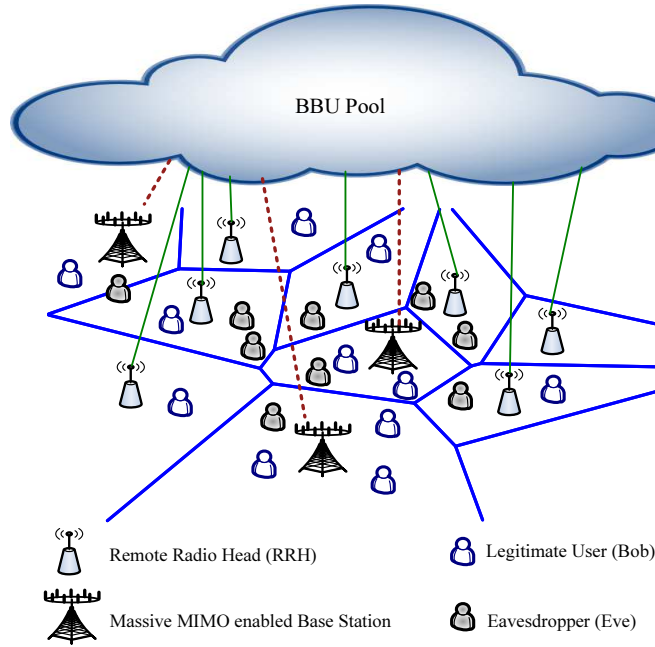}
        \caption{An illustration of two-tier heterogeneous C-RAN, where the red dash lines represent the backhual links between the macrocell base stations and BBU pool via X2/S1 interfaces, and the green solid lines represent the fronthaul links between the RRHs and BBU pool via optical fiber link.}
        \label{HCNs_stochastic}
    \end{center}
\end{figure}
\subsection{Network Model}
As shown in Fig. \ref{HCNs_stochastic}, we consider the downlink of a two-tier heterogeneous C-RAN, where the BBU pool in the cloud is established to coordinate the entire network. Massive MIMO enabled MBSs of the first tier, as high power nodes (HPNs), are connected with the BBU pool via backhaul link, while the RRHs of the second tier, as low power nodes (LPNs), are connected with the BBU pool via fronthaul link (optical fibre link). In this model, we have eavesdroppers (Eves) passively intercepting the secrecy messages without any active attacks.  The locations of Eves are modeled  as a homogeneous Poisson point process (HPPP) $\Phi_e$ with density $\lambda_e$.\footnote{In practice, the behavior of users is unknown and they can also act as malicious Eves, therefore, it is reasonable to assume that the locations of Eves follow PPP~\cite{Geraci_downlink}.}  On the other hand, the locations of MBSs are modeled as an independent HPPP $\Phi_\mathrm{M}$ with density $\lambda_\mathrm{M}$, and we model the locations of RRHs by an independent HPPP $\Phi_\mathrm{R}$ with density $\lambda_\mathrm{R}$.

Equipped with $N_\mathrm{M}$ antennas, each MBS uses zero-forcing beamforming (ZFBF) to communicate with $S$ single-antenna users
over the same resource block (RB) ($N_\mathrm{M}\gg S \geq 1$) using equal power assignment. { The ZFBF matrix at a MBS is ${\bf{W}}{\rm{ = }}{\bf{G}}{\left( {{{\bf{G}}^H}{\bf{G}}} \right)^{ - 1}}$ with the channel matrix $\bf{G}$~\cite{Hosseini2014_Massive}, where $(\cdot)^H$ denotes the Hermitian transpose.} Each RRH is equipped with a single-antenna
and serves a single-antenna user over one RB. {All channels are assumed to undergo independent identically distributed (i.i.d.)
quasi-static Rayleigh block fading.} Further, each user is assumed to be connected with its nearest BS such that the Euclidean plane
is divided into Poisson-Voronoi cells.

We consider the adoption of S-FFR for inter-tier interference mitigation and assume that there are a total of $K$ RBs, the number of RBs allocated to RRHs is $\alpha K$, and the number of RBs shared by RRHs and MBSs is $\left(1-\alpha\right) K$, in which $\alpha$ denotes the S-FFR factor, with $\left(0 \leq \alpha \leq 1\right)$. Since inter-RRH interference can be efficiently mitigated via cooperation among RRHs, same radio resources can be shared among the RRHs in the C-RAN~\cite{Peng_M_2015}. For RRH transmission over the $k$-th RB  allocated to RRHs, the receive signal-to-interference-plus-noise ratio (SINR) at a typical user can be expressed as
\begin{align}\label{SINR_RRH_1_k}
\gamma_{\mathrm{R},k} &=\frac{P_\mathrm{R}}{B_oN_o}h_{\mathrm{R},k} \beta \left| {{X_{{o},\mathrm{R}}}} \right|
^{-\eta_\mathrm{R}}, \quad k=1,\dots, \alpha K
\end{align}
where  $P_\mathrm{R}$ is the RRH transmit power allocated to each RB, {$B_o$ is the bandwidth per RB, $h_{\mathrm{R},k} \sim \rm{exp}(1)$ is the small-scale fading channel power gain, $\beta$ is a frequency dependent constant value, which is typically set as ${(\frac{{\text{c}}}{{4\pi {f_c}}})^2}$ with $c=3 \times 10^8 \rm m/s$} and the carrier frequency $f_c$, $\eta_\mathrm{R}$ is the pathloss exponent, $\left| {{X_{{o},\mathrm{R}}}} \right|$ denotes the  distance between the typical user and its  typical serving RRH,  and $N_o$ is the power spectrum density of the noise and the weak inter-RRH interference. For RRH transmission over the $\nu$-th RB  shared by the RRHs and MBSs, the receive SINR at a typical user is written as
\begin{align}\label{SINR_RRH_1_V}
\gamma_{\mathrm{R},\nu}&=\frac{P_\mathrm{R}h_{\mathrm{R},\nu}\beta \left| {{X_{{o},\mathrm{R}}}} \right|
 ^{-\eta_\mathrm{R}}}{I_{\mathrm{M},\nu}+B_oN_o}, \quad \nu=1,\dots, \left(1-\alpha\right) K
\end{align}
where  $h_{\mathrm{R},\nu}\sim \rm{exp}(1)$ is the small-scale fading channel power gain,  $I_{\mathrm{M},\nu}$ is the inter-tier interference from the MBSs, which is given by
 \begin{align}
 I_{\mathrm{M},\nu}=\sum\limits_{\ell  \in {\Phi_\mathrm{M}}} {\frac{P_\mathrm{M}}{S}{h_{\ell,\nu} }\beta \left| {{X_{\ell ,\mathrm{M}}}} \right|^{-\eta_\mathrm{M}}},
 \end{align}
where $P_\mathrm{M}$ is the MBS transmit power of each RB, $h_{\ell,\nu} \sim \Gamma\left(S,1\right)$\footnote{$\Gamma\left(\cdot,\cdot\right)$ is the upper incomplete gamma function~\cite[(8.350)]{gradshteyn}.} is the small-scale fading interfering channel power gain, $\left| {{X_{\ell ,\mathrm{M}}}} \right|$ is the distance between the interfering MBS $\ell \in {\Phi_\mathrm{M}}$ and the typical user, and $\eta_\mathrm{M}$ is the pathloss exponent.

 We consider the non-colluding eavesdropping scenario where the most malicious Eve i.e., the one with the largest SINR of the received signal, dominates the secrecy rate~\cite{LunDong}. Thus, for RRH transmissions over the $k$-th and $\nu$-th RB, the receive SINRs at the most malicious Eve ${e^*}$ are given by
 \begin{align}\label{gamma_e_RRH_k_SINR}
 \gamma_{\mathrm{R},k}^{e^*}=\mathop {\max }\limits_{e \in \Phi_e}\left\{\frac{{{P_\mathrm{R}}h_{\mathrm{R}{\rm{,}}k}^e\beta {{\left| {X_{o{\rm{,}}\mathrm{R}}^e} \right|}^{ - {\eta _\mathrm{R}}}}}}{{I_{\mathrm{R},k}^e + {B_o}{N_e}}}\right\},
 \end{align}
 and
  \begin{align}\label{gamma_e_RRH_1}
 \gamma_{\mathrm{R},\nu}^{e^*}=\mathop {\max }\limits_{e \in \Phi_e}\left\{\frac{{{P_\mathrm{R}}h_{\mathrm{R}{\rm{,}}\nu}^e\beta {{\left| {X_{o{\rm{,}}\mathrm{R}}^e} \right|}^{ - {\eta _\mathrm{R}}}}}}{{I_{\mathrm{R},\nu}^e+I_{\mathrm{M},\nu}^e  + {B_o}{N_e}}}\right\},
 \end{align}
respectively, where $h_{\mathrm{R}{\rm{,}}i}^e (i \in \left\{k,\nu\right\}) \sim \rm{exp}(1)$ and $\left| {X_{o{\rm{,}}\mathrm{R}}^e} \right|$ are the small-scale fading eavesdropping channel power gain and the distance between the typical serving RRH and the Eve $e \in \Phi_e$, respectively, $N_e$ is the noise power spectrum density,   $I_{\mathrm{R},i}^e$ and $I_{\mathrm{M},\nu}^e$ are the interference from the  RRHs and MBSs, which are found as
\begin{equation}\label{interference_M_R_Eve}
\left\{\begin{aligned}
I_{\mathrm{R},i}^e &=\sum\limits_{j  \in {\Phi_\mathrm{R}} /{o}} {{P_\mathrm{R}}{h_{j,i}^e }\beta \left| {{X_{j,\mathrm{R}}^e}} \right|^{-\eta_\mathrm{R}}},\\
I_{\mathrm{M},\nu}^e &=\sum\limits_{\ell  \in {\Phi_\mathrm{M}}} {\frac{P_\mathrm{M}}{S}{h_{\ell,\nu}^e }\beta \left| {{X_{\ell ,\mathrm{M}}^e}} \right|^{-\eta_\mathrm{M}}},
\end{aligned}\right.
\end{equation}
where ${h_{j,i}^e } \sim \rm{exp}(1)$ and $\left| {{X_{j,\mathrm{R}}^e}} \right|$ are the small-scale fading interfering channel power gain and the distance between the RRH $j  \in {\Phi_\mathrm{R}} /{o}$ (except the typical serving RRH) and the Eve, respectively, ${h_{\ell,\nu}^e }\sim \Gamma\left(S,1\right)$~\cite{Hosseini2014_Massive} and $\left| {{X_{\ell ,\mathrm{M}}^e}} \right|$ are the small-scale fading interfering channel power gain and the distance between the MBS $\ell  \in {\Phi_\mathrm{M}} $ and the Eve, respectively.

Due to the limited backhaul capacity, the inter-MBS interference is assumed to be not mitigated.
 Thus, for MBS transmission over the $\nu$-th RB  shared by RRHs and MBSs, the receive SINR at a typical user is written as
 \begin{align}\label{SNR_MBS}
 \gamma_{\mathrm{M},\nu}= \frac{\frac{P_\mathrm{M}}{S}g_{\mathrm{M},\nu}\beta \left| {{X_{{o},\mathrm{M}}}} \right|
^{-\eta_\mathrm{M}}}{J_{\mathrm{M},\nu}+J_{\mathrm{R},\nu}+B_o N_1},
 \end{align}
 where $g_{\mathrm{M},\nu} \sim \Gamma\left(N_\mathrm{M}-S+1,1\right)$ is the small-scale fading channel power gain, $\left| {{X_{{o},\mathrm{M}}}} \right|$ is the distance between the typical user and its typical serving MBS, $N_1$ is the power spectrum density of the noise. In \eqref{SNR_MBS}, $J_{\mathrm{M},\nu}$ and $J_{\mathrm{R},\nu}$ are the interference  from MBSs and RRHs, which are given by
 \begin{equation}\label{interference_M_nu}
 \left\{\begin{aligned}
 J_{\mathrm{M},\nu}&=\sum\limits_{\ell  \in {\Phi_\mathrm{M}}/{o}} {\frac{P_\mathrm{M}}{S}{g_{\ell,\nu}}\beta \left| {{X_{\ell ,\mathrm{M}}}} \right|^{-\eta_\mathrm{M}}},\\
  J_{\mathrm{R},\nu}&=\sum\limits_{j  \in {\Phi_\mathrm{R}}} {{P_\mathrm{R}}{g_{j,\nu} }\beta\left| {{X_{j ,\mathrm{R}}}} \right|^{-\eta_\mathrm{R}}},
 \end{aligned}\right.
\end{equation}
where $g_{\ell,\nu} \sim \Gamma\left(S,1\right)$ and $\left| {{X_{\ell ,\mathrm{M}}}} \right|$ are the small-scale fading interfering channel power gain and the distance between the interfering MBS $\ell  \in {\Phi_\mathrm{M}} /{o}$ (except the typical serving MBS) and the typical user, respectively,   $g_{j,\nu}\sim \exp\left(1\right)$ and  $\left| {{X_{j ,\mathrm{R}}}} \right|$ are the small-scale interfering channel power gain and the distance between the  interfering RBS $j  \in {\Phi_\mathrm{R}}$ and the typical user, respectively.

Similar to \eqref{gamma_e_RRH_1}, for MBS transmission, the receive SINR $\gamma_{\mathrm{M},\nu}^{e^*}$ at the most malicious Eve ${e^*}$ is given by
\begin{align}\label{SINR_M_R_Eve}
\gamma_{\mathrm{M},\nu}^{e^*}=\mathop {\max }\limits_{e \in \Phi_e}\left\{\frac{\frac{P_\mathrm{M}}{S}g_{\mathrm{M},\nu}^{e}\beta \left| {{X_{{o},\mathrm{M}}^{e}}} \right|
^{-\eta_\mathrm{M}}}{J_{\mathrm{M},\nu}^{e}+J_{\mathrm{R},\nu}^{e}+B_o N_e}\right\},
\end{align}
where $g_{\mathrm{M},\nu}^{e} \sim \exp\left(1\right)$ and $\left| {{X_{{o},\mathrm{M}}^{e}}} \right|$ are the small-scale fading channel power gain and distance between the typical serving MBS and the Eve, respectively. In particular, we consider the worst-case scenario that Eves are capable of mitigating the intra-cell interference~\cite{Geraci_downlink}. In \eqref{SINR_M_R_Eve}, $J_{\mathrm{M},\nu}^{e}$ and $J_{\mathrm{R},\nu}^{e}$ are the interference from the MBSs and RRHs, respectively, given by
 \begin{equation}\label{interference_M_R_Eve}
 \left\{\begin{aligned}
J_{\mathrm{M},\nu}^e &=\sum\limits_{\ell  \in {\Phi_\mathrm{M}}/{o}} {\frac{P_\mathrm{M}}{S}{g_{\ell,\nu}^e }\beta \left| {{X_{\ell ,\mathrm{M}}^e}} \right|^{-\eta_\mathrm{M}}},\\
J_{\mathrm{R},\nu}^e &=\sum\limits_{j  \in {\Phi_\mathrm{R}}} {{P_\mathrm{R}}{g_{j,\nu}^e }\beta \left| {{X_{j,\mathrm{R}}^e}} \right|^{-\eta_\mathrm{R}}},
\end{aligned}\right.
\end{equation}
where $g_{\ell,\nu}^e \sim \Gamma\left(S,1\right)$ and $\left| {{X_{\ell ,\mathrm{M}}^e}} \right|$ are the small-scale fading interfering channel power gain and the distance between the interfering MBS $\ell  \in {\Phi_\mathrm{M}} /{o}$ (except the typical serving MBS) and Eve, respectively, and ${g_{j,\nu}^e }\sim  \exp\left(1\right)$ and $\left| {{X_{j,\mathrm{R}}^e}} \right|$ are the small-scale fading interfering channel power gain and the distance between the interfering RRH $j  \in {\Phi_\mathrm{R}}$ and Eve, respectively.

%

\subsection{Power Consumption Model}
The total power consumption at each  RRH is given by
\setcounter{equation}{10} \begin{align}\label{RRH_power_cost}
P_\mathrm{R}^{total} = K \frac{{{P_\mathrm{R}}}}{\varepsilon_\mathrm{R}}+P_\mathrm{R}^0+ P_\mathrm{fh} ,
\end{align}
in which $\varepsilon_\mathrm{R}$ is the efficiency of the power amplifier, $P_\mathrm{R}^0$ is the static hardware power consumption of the RRH,  and $P_\mathrm{fh} $ denotes the power consumption of the fronthaul link.

{We employ a general massive MIMO power consumption model proposed in \cite{BE2014}, which can clearly specify how the power scales with the number of antennas and active users in each macrocell. Thus, the total power consumption at each MBS is found as
\begin{align}\label{power_cost_MBS}
P_\mathrm{M}^{total} =  &{\left(1-\alpha\right) K} \Big(\frac{{{P_\mathrm{M}}}}{\varepsilon_\mathrm{M} } + \sum\limits_{\rho = 1}^3 \big( {\left(S\right)^\rho}{\Lambda_{\rho,0}}\nonumber\\
&\;\;\;\;~~~+ {\left(S\right)^{\left(\rho-1\right)}}N_\mathrm{M} {\Lambda_{\rho,1}} \big) \Big)+P_\mathrm{M}^0 + P_\mathrm{bh},
\end{align}
where  $\varepsilon_\mathrm{M} \left(0< \varepsilon_\mathrm{M} \leq 1\right)$ is the efficiency of the power amplifier, {the parameters $\Lambda_{\rho,0}$ and $\Lambda_{\rho,1}$ depend on the transceiver chains, coding and decoding, precoding, etc., which are detailed in  Section \ref{SR_2016}}, $P_\mathrm{M}^0$ is the MBS's static hardware power consumption, and $P_\mathrm{bh}$ is the power consumption of the backhaul link.}

\section{{Secrecy Performance Analysis}}
In this section, the effects of massive MIMO and C-RAN on the secrecy performance are studied in terms of both the area ergodic secrecy rate and secrecy outage probability.

Secrecy outage probability captures the  probability of both reliability and secrecy for one transmission.

\subsection{Area Ergodic Secrecy Rate}
Area ergodic secrecy rate represents the secrecy capacity limitation of the network, which allows us to investigate the impacts of different densities of RRHs and massive MIMO macrocells on the network secrecy performance. We first study the ergodic capacity of the channel between the typical RRH and its served user, which is given as follows.
\begin{theorem}
When using the $k$-th RB allocated to the RRHs, the ergodic capacity $\bar{C}_{\mathrm{R},k}$ of the channel between the typical RRH and its served user is derived as \eqref{RRH_Ergodic_rate_1} (see top of next page). When using the $\nu$-th RB shared by the RRHs and MBSs, the ergodic capacity $\bar{C}_{\mathrm{R},\nu}$ of the channel between the typical RRH and its served user is derived as \eqref{RRH_Ergodic_rate_2}, where $\mathrm{B}_{\left(\cdot\right)}\left[\cdot,\cdot\right]$
is the incomplete beta function~\cite[(8.391)]{gradshteyn}.\footnote{{ Note that the special functions such as incomplete beta function have been included in the commonly-used mathematical softwares such as Mathematic and Matlab, and can be directly calculated.}}
\end{theorem}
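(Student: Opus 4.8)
The plan is to write each ergodic capacity as the mean of the instantaneous spectral efficiency, $\bar{C}_{\mathrm{R},k}=B_o\,\E\!\left[\log_2\!\left(1+\gamma_{\mathrm{R},k}\right)\right]$ and $\bar{C}_{\mathrm{R},\nu}=B_o\,\E\!\left[\log_2\!\left(1+\gamma_{\mathrm{R},\nu}\right)\right]$, and to evaluate the expectation with the tail (complementary-CDF) representation
\begin{align}
\E\!\left[\log_2\!\left(1+\gamma\right)\right]=\frac{1}{\ln 2}\int_0^\infty\frac{\Pr\!\left(\gamma>t\right)}{1+t}\,dt ,
\end{align}
which follows from $\log_2(1+\gamma)=\frac{1}{\ln2}\int_0^\gamma\frac{dt}{1+t}$ together with Tonelli's theorem. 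Both statements then reduce to finding $\Pr(\gamma>t)$ for the two SINRs in \eqref{SINR_RRH_1_k} and \eqref{SINR_RRH_1_V}. In each case I would condition first on the serving distance $R=\left|X_{o,\mathrm{R}}\right|$, whose law under nearest-RRH association is the Rayleigh-type density $f_R(r)=2\pi\lambda_\mathrm{R}\,r\,e^{-\pi\lambda_\mathrm{R}r^2}$ coming from the void probability of $\Phi_\mathrm{R}$.

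For the dedicated RB, \eqref{SINR_RRH_1_k} contains no inter-tier interference, so conditioning on $R=r$ and using $h_{\mathrm{R},k}\sim\exp(1)$ gives the exponential tail
\begin{align}
\Pr\!\left(\gamma_{\mathrm{R},k}>t\mid R=r\right)=\exp\!\left(-\,t\,\frac{B_oN_o}{P_\mathrm{R}\beta}\,r^{\eta_\mathrm{R}}\right).
\end{align}
Averaging over $R$ and substituting into the tail identity produces a double integral; exchanging the order of integration, the inner integral over $t$ has the form $\int_0^\infty\frac{e^{-at}}{1+t}\,dt=e^{a}\,\Gamma(0,a)$, an exponential-integral expression, and the remaining single integral over $r$ is \eqref{RRH_Ergodic_rate_1}.

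For the shared RB the same $\exp(1)$ statistics of $h_{\mathrm{R},\nu}$ turn the conditional tail of \eqref{SINR_RRH_1_V} into the Laplace transform of the interference-plus-noise evaluated at $s=t\,r^{\eta_\mathrm{R}}/(P_\mathrm{R}\beta)$, namely $\Pr(\gamma_{\mathrm{R},\nu}>t\mid R=r)=e^{-sB_oN_o}\,\mathcal{L}_{I_{\mathrm{M},\nu}}(s)$. The crux is to evaluate $\mathcal{L}_{I_{\mathrm{M},\nu}}(s)$ in closed form. Applying the probability generating functional of $\Phi_\mathrm{M}$, together with the moment generating function $(1+u)^{-S}$ of the $\Gamma(S,1)$ interfering gains $h_{\ell,\nu}$, and using that nearest-BS association forces every interfering MBS to lie outside the disk of radius $r$ about the user, I obtain a spatial integral over $[r,\infty)$,
\begin{align}
\mathcal{L}_{I_{\mathrm{M},\nu}}(s)=\exp\!\left(-2\pi\lambda_\mathrm{M}\int_r^\infty\!\left(1-\Big(1+s\,\tfrac{P_\mathrm{M}\beta}{S}\,x^{-\eta_\mathrm{M}}\Big)^{-S}\right)x\,dx\right).
\end{align}
The substitution $v=s\frac{P_\mathrm{M}\beta}{S}x^{-\eta_\mathrm{M}}$ followed by an integration by parts reduces this to $\int_0^{V}v^{-\delta}(1+v)^{-S-1}\,dv$ with $\delta=2/\eta_\mathrm{M}$ and a finite upper limit $V$ induced by the inner radius $r$; mapping $u=v/(1+v)$ then identifies it with the incomplete beta function $\mathrm{B}_{V/(1+V)}\!\left[1-\delta,\,S+\delta\right]$ of \eqref{RRH_Ergodic_rate_2}. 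Inserting $\mathcal{L}_{I_{\mathrm{M},\nu}}$, averaging over $R$, and integrating against $1/(1+t)$ then assembles $\bar{C}_{\mathrm{R},\nu}$.

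The main obstacle is precisely this interference step: the spatial integral converges only for $2/\eta_\mathrm{M}<S$, so the boundary terms of the integration by parts must be checked carefully, and the finite inner radius (rather than an integral starting from $0$) is what separates the present heterogeneous-C-RAN calculation from the textbook single-tier one and is responsible for the incomplete — rather than complete — beta function. Once $\mathcal{L}_{I_{\mathrm{M},\nu}}$ is in hand, the remaining two integrations over $r$ and $t$ are carried out by the same Tonelli exchange used in the dedicated-RB case; the dedicated-RB result is by comparison routine, collapsing to a single exponential-integral evaluation.
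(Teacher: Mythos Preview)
Your overall architecture---tail/CCDF representation of $\log_2(1+\gamma)$, conditioning on the serving distance, the exponential-integral identity $\int_0^\infty\frac{e^{-at}}{1+t}\,dt=e^a\Gamma(0,a)$ for the dedicated RB, and the Laplace-transform reduction for the shared RB---matches the paper's Appendix~A. But there is one concrete error that would make your final expressions disagree with \eqref{RRH_Ergodic_rate_1}--\eqref{CDF_R_secrecy_rate}.

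The gap is in the law of the serving distance. You take $f_R(r)=2\pi\lambda_\mathrm{R}\,r\,e^{-\pi\lambda_\mathrm{R}r^2}$, i.e.\ the distance to the nearest point of $\Phi_\mathrm{R}$ alone. In the paper's model the user associates with the nearest BS across \emph{both} tiers, so the typical RRH user is one whose nearest point of $\Phi_\mathrm{R}\cup\Phi_\mathrm{M}$ happens to be an RRH. The conditional density of the serving distance is therefore
\[
f_{|X_{o,\mathrm{R}}|}(x)=\frac{2\pi\lambda_\mathrm{R}}{\mathcal{A}_\mathrm{R}}\,x\,e^{-\pi(\lambda_\mathrm{R}+\lambda_\mathrm{M})x^2}
=2\pi(\lambda_\mathrm{R}+\lambda_\mathrm{M})\,x\,e^{-\pi(\lambda_\mathrm{R}+\lambda_\mathrm{M})x^2},
\]
with $\mathcal{A}_\mathrm{R}=\lambda_\mathrm{R}/(\lambda_\mathrm{R}+\lambda_\mathrm{M})$. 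This is exactly what produces the $(\lambda_\mathrm{R}+\lambda_\mathrm{M})$ prefactor and exponent in \eqref{RRH_Ergodic_rate_1} and \eqref{RRH_Ergodic_rate_2}. You implicitly use the two-tier association later (``nearest-BS association forces every interfering MBS to lie outside the disk of radius $r$''), which is the right reasoning for the lower limit $r$ in the MBS interference integral, but it is inconsistent with the single-tier density you wrote earlier. Fixing the density resolves both parts. (Minor: the paper's $\bar C_{\mathrm{R},k}$ is in bits/s/Hz, without the $B_o$ prefactor you inserted.)

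On the Laplace transform $\mathcal{L}_{I_{\mathrm{M},\nu}}$, your integration-by-parts route with the substitution $u=v/(1+v)$ is a valid alternative, but note that the paper obtains the specific form in \eqref{CDF_R_secrecy_rate} differently: it uses the finite binomial identity $1-(1+u)^{-S}=\sum_{\mu=1}^{S}\binom{S}{\mu}u^{\mu}(1+u)^{-S}$ and integrates term by term, which yields a \emph{sum} of incomplete beta functions $\mathrm{B}_{(\cdot)}\!\left[\mu-\tfrac{2}{\eta_\mathrm{M}},\,1-S\right]$ rather than a single $\mathrm{B}_{(\cdot)}\!\left[1-\tfrac{2}{\eta_\mathrm{M}},\,S+\tfrac{2}{\eta_\mathrm{M}}\right]$. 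Your expression and the paper's should be equivalent, but if the goal is to reproduce \eqref{CDF_R_secrecy_rate} verbatim you will need the binomial expansion step.
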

\begin{proof}
A detailed proof is provided in Appendix~A.
\end{proof}
\begin{figure*}[!t]
\normalsize
\begin{align}\label{RRH_Ergodic_rate_1}
\bar{C}_{\mathrm{R},k}= \frac{2\pi}{{\ln 2}} \left({{\lambda _\mathrm{R}} + {\lambda _\mathrm{M}}}\right)\int_0^\infty  {{e^{\frac{{{B_o}{N_o}}}{{{P_{\rm{R}}}\beta }}{x^{{\eta _{\rm{R}}}}}}}\Gamma \left( {0,\frac{{{B_o}{N_o}}}{{{P_{\rm{R}}}\beta }}{x^{{\eta _{\rm{R}}}}}} \right)x{e^{ - \pi \left( {{\lambda _\mathrm{R}} + {\lambda _\mathrm{M}}} \right){x^2}}}dx},
\end{align}
\hrulefill
\begin{align}\label{RRH_Ergodic_rate_2}
\bar{C}_{\mathrm{R},\nu}= \frac{2\pi}{{\ln 2}} \left({{\lambda _\mathrm{R}} + {\lambda _\mathrm{M}}}\right)\int_0^\infty  {\left[ {\int_0^\infty  {\frac{{\bar{F}}_{{\gamma_{\mathrm{R},\nu}}\mid \left\{\left| {{X_{o,{\rm{R}}}}} \right|=x\right\}}\left(\gamma\right)}{{1 + \gamma }}d\gamma } } \right]x{e^{ - \pi \left( {{\lambda _\mathrm{R}} + {\lambda _\mathrm{M}}} \right){x^2}}}dx},
\end{align}
with
\begin{align}\label{CDF_R_secrecy_rate}
& {\bar{F}}_{{\gamma_{\mathrm{R},\nu}}\mid \left\{\left| {{X_{o,{\rm{R}}}}} \right|=x\right\}}\left(\gamma\right)={e^{ - \frac{{{B_o}{N_o}}}{{{P_{\rm{R}}}\beta }}{x^{{\eta _{\rm{R}}}}}\gamma }} \times \nonumber\\
& \exp \bigg\{   - {\lambda _\mathrm{M}}2\pi \sum\limits_{\mu  = 1}^S
{{
S\choose
\mu
}{{\left( {  \frac{{x^{{\eta _{\rm{R}}}}}\gamma { {P_{\mathrm{M}}}}}{{P_{\rm{R}}S}}} \right)}^\mu }}
 \frac{{\left({\frac{{x^{{\eta _{\rm{R}}}}}\gamma { {P_{\mathrm{M}}}}}{{P_{\rm{R}}S}}}\right)^{ - \mu  + \frac{2}{{{\eta _\mathrm{M}}}}}}}{{{\eta_\mathrm{M}}}} {B_{\left(- {\frac{{x^{{\eta _{\rm{R}}}-\eta _\mathrm{M}}}\gamma { {P_{\mathrm{M}}}}}{{P_{\rm{R}}S}}
} \right)}}\left[
{\mu  - \frac{2}{{{\eta_\mathrm{M}}}},1 - S} \right]\bigg\}
\end{align}
\hrulefill
\end{figure*}
\begin{figure*}[!t]
\normalsize
\begin{align}\label{Eve_Ergodic_capacity_overline_k}
&\bar{C}_{\mathrm{R},k}^{e^*}=\frac{1}{{\ln 2}}\int_0^\infty \frac{1-F_{\gamma_{\mathrm{R},k}^{e^*}}\left(x\right)}{1+x} dx,
\end{align}
with
\begin{align}\label{X3_34_overline_F_k}
& F_{\gamma_{\mathrm{R},k}^{e^*}}\left(x\right)=\exp\bigg\{-2\pi{\lambda _e}\int_0^\infty  \exp\Big[-\frac{{r}^{ {\eta _\mathrm{R}}}x}{{P_\mathrm{R}}\beta}{B_o}{N_e} -\lambda_\mathrm{R}\pi \Gamma\left(1+\frac{2}{\eta _\mathrm{R}}\right) \Gamma\left(1-\frac{2}{\eta _\mathrm{R}}\right) \big({{r}^{ {\eta _\mathrm{R}}}x}\big) ^{\frac{2}{\eta _\mathrm{R}}} \Big] r dr \bigg\}
\end{align}
\hrulefill
\begin{align}\label{Eve_Ergodic_capacity_overline}
&\bar{C}_{\mathrm{R},\nu}^{e^*}=\frac{1}{{\ln 2}}\int_0^\infty \frac{1-F_{\gamma_{\mathrm{R},\nu}^{e^*}}\left(x\right)}{1+x} dx,
\end{align}
with
\begin{align}\label{X3_34_overline_F}
 F_{\gamma_{\mathrm{R},\nu}^{e^*}}\left(x\right)=&\exp\bigg\{-2\pi{\lambda _e}\int_0^\infty  \exp\Big[-\frac{{r}^{ {\eta _\mathrm{R}}}x}{{P_\mathrm{R}}\beta}{B_o}{N_e} -\lambda_\mathrm{R}\pi \Gamma\left(1+\frac{2}{\eta _\mathrm{R}}\right) \Gamma\left(1-\frac{2}{\eta _\mathrm{R}}\right) \big({{r}^{ {\eta _\mathrm{R}}}x}\big) ^{\frac{2}{\eta _\mathrm{R}}} \nonumber\\
&-2\pi \lambda_\mathrm{M} \sum\limits_{\mu  = 1}^S
{S\choose \mu} \left(\frac{{r}^{ {\eta _\mathrm{R}}}x P_\mathrm{M}}{P_\mathrm{R} S}\right)^{\frac{2}{\eta_\mathrm{M}}}\frac{\Gamma\left(\mu -\frac{2}{\eta_\mathrm{M}}\right) \Gamma\left(-\mu +\frac{2}{\eta_\mathrm{M}}+S\right)}{\eta_\mathrm{M}\Gamma\left(S\right)} \Big] r dr \bigg\}
\end{align}
\hrulefill
\end{figure*}
We next derive the ergodic capacity of the channel between the most malicious eavesdropper and the typical RRH, which is given as follows:
\begin{theorem}
For RRH transmissions over the $k$-th RB  and $\nu$-th RB, the ergodic capacity $\bar{C}_{\mathrm{R},k}^{e^*} $ and $\bar{C}_{\mathrm{R},\nu}^{e^*} $ of the most malicious eavesdropper's channel are derived as \eqref{Eve_Ergodic_capacity_overline_k} and \eqref{Eve_Ergodic_capacity_overline}, respectively, in the next page.

\end{theorem}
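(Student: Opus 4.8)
The plan is to establish the two displayed results in two stages: first reduce the ergodic capacity to the tail of the strongest-eavesdropper SINR, and then compute the corresponding CDF via the Laplace transforms of the interference and the probability generating functional (PGFL) of $\Phi_e$.

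First I would reduce the ergodic capacity to the SINR tail using the elementary identity, valid for any non-negative random variable $Z$, $\E[\log_2(1+Z)]=\frac{1}{\ln 2}\int_0^\infty\frac{\Pr(Z>x)}{1+x}\,dx$, which follows by writing $\ln(1+Z)=\int_0^\infty\frac{\mathbbm{1}(x<Z)}{1+x}\,dx$ and applying Tonelli. Taking $Z=\gamma_{\mathrm{R},k}^{e^*}$ and $Z=\gamma_{\mathrm{R},\nu}^{e^*}$ and using $\Pr(Z>x)=1-F_{\gamma_{\mathrm{R},\cdot}^{e^*}}(x)$ immediately yields \eqref{Eve_Ergodic_capacity_overline_k} and \eqref{Eve_Ergodic_capacity_overline}. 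This step is routine; the substance lies in the CDFs.

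For the CDF I would first condition on a single eavesdropper at distance $r$ from the typical serving RRH and compute its per-eavesdropper SINR tail. Since the desired eavesdropping gain $h_{\mathrm{R},\cdot}^e\sim\exp(1)$, setting $s=r^{\eta_\mathrm{R}}x/(P_\mathrm{R}\beta)$ gives $\Pr(\gamma_{\mathrm{R},\cdot}^e>x\mid r)=\E\big[e^{-s(I+B_oN_e)}\big]=e^{-sB_oN_e}\,\mathcal{L}_{I_\mathrm{R}}(s)$ for the $k$-th RB, with the additional factor $\mathcal{L}_{I_\mathrm{M}}(s)$ for the $\nu$-th RB, where the interference terms are those in \eqref{interference_M_R_Eve}. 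The Rayleigh-faded RRH interference Laplace transform follows from the standard PPP/Campbell computation and returns $\exp\!\big(-\lambda_\mathrm{R}\pi\Gamma(1+\tfrac{2}{\eta_\mathrm{R}})\Gamma(1-\tfrac{2}{\eta_\mathrm{R}})(r^{\eta_\mathrm{R}}x)^{2/\eta_\mathrm{R}}\big)$, reproducing the second exponent term. For the $\nu$-th RB I would evaluate $\mathcal{L}_{I_\mathrm{M}}(s)=\exp\!\big(-2\pi\lambda_\mathrm{M}\int_0^\infty(1-(1+s\tfrac{P_\mathrm{M}}{S}\beta v^{-\eta_\mathrm{M}})^{-S})v\,dv\big)$ using $\E[e^{-tg}]=(1+t)^{-S}$ for $g\sim\Gamma(S,1)$; a binomial expansion of the integrand together with standard Beta/Gamma identities (as in the derivation of \eqref{CDF_R_secrecy_rate}) recasts the radial integral into the $\sum_{\mu=1}^S\binom{S}{\mu}$ form in \eqref{X3_34_overline_F}.

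Finally, I would pass from a single eavesdropper to the strongest one. Writing $F_{\gamma^{e^*}}(x)=\Pr(\max_{e\in\Phi_e}\gamma^e<x)=\E\big[\prod_{e}(1-\Pr(\gamma^e>x\mid|X^e_{o,\mathrm{R}}|))\big]$ and invoking the PGFL of the HPPP $\Phi_e$ gives $F_{\gamma^{e^*}}(x)=\exp\!\big(-\lambda_e\int_{\mathbb{R}^2}\Pr(\gamma^e>x\mid|y|)\,dy\big)$, which in polar coordinates is exactly \eqref{X3_34_overline_F_k} (resp. \eqref{X3_34_overline_F}). I expect the main obstacle to be that the interference seen by distinct eavesdroppers is correlated through the common processes $\Phi_\mathrm{R}$ and $\Phi_\mathrm{M}$, so the product form is not literally exact; the closed form relies on treating the per-eavesdropper SINRs as conditionally independent, equivalently averaging each eavesdropper's interference separately before applying the PGFL, which is the standard tractability assumption in this setting. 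A secondary technical point is convergence of the Laplace-transform integrals, which requires $\eta_\mathrm{R},\eta_\mathrm{M}>2$ so that $\Gamma(1-\tfrac{2}{\eta_\mathrm{R}})$ and the MBS Gamma-function factors are finite.
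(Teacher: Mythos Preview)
Your proposal is correct and follows essentially the same route as the paper's Appendix~B: reduce the ergodic capacity to the SINR tail via $\E[\log_2(1+Z)]=\frac{1}{\ln 2}\int_0^\infty\frac{1-F_Z(x)}{1+x}\,dx$, express the per-eavesdropper tail through the Laplace transforms of $I_{\mathrm{R}}^e$ and $I_{\mathrm{M}}^e$ using the exponential/Gamma moment generating functions and the PPP generating functional, and then apply the PGFL of $\Phi_e$ to pass to the maximum. Your explicit remark that the product form over eavesdroppers treats the interference fields as independent across Eves (i.e., the expectation over $\Phi_\mathrm{R},\Phi_\mathrm{M}$ is pulled inside the $\lambda_e$-integral) is a point the paper applies silently in (B.3)--(B.4); otherwise the arguments coincide.
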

\begin{proof}
A detailed proof is provided in Appendix~B.
\end{proof}

Based on \textbf{Theorem 1} and \textbf{Theorem 2 }, using Jensen's inequality that $\mathbb{E}\left\{ {\max \left( {X{\rm{,}}Y} \right)} \right\} \ge \max \left( {\mathbb{E}\left\{ x \right\}{\rm{,}}\mathbb{E}\left\{ Y \right\}} \right)$, the ergodic secrecy rate for the typical RRH transmission over the $k$-th RB is lower bounded as~\cite{Zhou2010,junzhu2014}
\setcounter{equation}{19}\begin{align}\label{RRH_ESR_1}
R_{\mathrm{R},k}^{\mathrm{s}}=\left[\bar{C}_{\mathrm{R},k}-\bar{C}_{\mathrm{R},k}^{e^*}\right]^{+},
\end{align}
where $[x]^+=\max\left\{x,0\right\}$.

Likewise, the ergodic secrecy rate for the typical RRH transmission over the $\nu$-th RB is lower bounded as
\begin{align}\label{RRH_ESR_2}
R_{\mathrm{R},\nu}^{\mathrm{s}}=\left[\bar{C}_{\mathrm{R},\nu}-\bar{C}_{\mathrm{R},\nu}^{e^*}\right]^{+}.
\end{align}
\textbf{Remark 2:} \emph{From the results in \textbf{Theorem 1}, \textbf{Theorem 2}, \eqref{RRH_ESR_1} and \eqref{RRH_ESR_2}, we realize that the ergodic secrecy rate for RRH transmission increases with the density of RRHs, which can be explained by the facts that:  1) When deploying more RRHs in the same area, the distance between the legitimate user and its associated RRH is shorter, which decreases the pathloss; and 2) more interference from RRHs is present at the eavesdroppers, which degrades the eavesdropping channel. }

The area ergodic secrecy rate (in bits/s/m$^2$) of the RRH tier in the heterogeneous C-RAN is calculated as
\begin{align}\label{RRH_tier}
R_{\mathrm{R}}^{\mathrm{s}}=\lambda_{\mathrm{R}}\left(\alpha K B_o R_{\mathrm{R},k}^{\mathrm{s}}+\left(1-\alpha\right) K B_o R_{\mathrm{R},\nu}^{\mathrm{s}}\right).
\end{align}

For MBS transmission, we have a tractable lower bound expression for the ergodic capacity of the channel between the typical MBS and its serving user as stated in the following theorem.
\begin{theorem}
For MBS transmission over the $\nu$-th RB, the ergodic capacity  of the channel between the typical MBS and its served user is lower bounded in closed-form as \begin{align}\label{MBS_EC_nu}
&\vspace{-0.5 cm}\bar{C}_{\mathrm{M},\nu}^{\mathrm{L}}=\log_2\Bigg(1+\exp\bigg(\ln \left( {\frac{{{P_{\rm{M}}}}}{S}\beta } \right)+ \psi\left(N_\mathrm{M}-S+1\right)-\frac{\eta _{\rm{M}}}{2}\nonumber\\
 &\vspace{-0.5 cm}\left( \psi \left( 1 \right)
 - \ln \left( {\pi \left( {{\lambda _{\rm{R}}} + {\lambda _{\rm{M}}}} \right)} \right) \right)-\ln\Big( {\frac{{{P_\mathrm{M}}\beta 2\pi {\lambda _\mathrm{M}}} \Gamma\left(2-\frac{\eta_\mathrm{M}}{2}\right) }{{\left({\eta_\mathrm{M}} - 2\right)\left(\pi\lambda _\mathrm{M}+\pi\lambda _\mathrm{R}\right)^{1-\frac{\eta_\mathrm{M}}{2}}}}}\nonumber\\
& +{\frac{{{P_\mathrm{R}}\beta 2\pi {\lambda _\mathrm{R}}} \Gamma\left(2-\frac{\eta_\mathrm{R}}{2}\right) }{{\left({\eta_\mathrm{R}} - 2\right)\left(\pi\lambda _\mathrm{M}+\pi\lambda _\mathrm{R}\right)^{1-\frac{\eta_\mathrm{R}}{2}}}}}+ B_o N_1\Big)\bigg)\Bigg),
\end{align}
where $\psi\left(\cdot\right)$ is the digamma function~\cite{Abramowitz}. For very large $N_\mathrm{M}$, $\psi\left(N_\mathrm{M}-S+1\right)\approx \ln\left(N_\mathrm{M}-S+1\right)$~\cite{Lifeng_massiveMIMO}.
\end{theorem}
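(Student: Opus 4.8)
The plan is to combine the standard convexity lower bound for ergodic capacity with Campbell's theorem for the mean aggregate interference. Writing $\bar{C}_{\mathrm{M},\nu}=\mathbb{E}\{\log_2(1+\gamma_{\mathrm{M},\nu})\}$ and observing that $t\mapsto\log_2(1+e^{t})$ is convex and increasing, Jensen's inequality yields $\bar{C}_{\mathrm{M},\nu}\geq\log_2\left(1+\exp\left(\mathbb{E}\{\ln\gamma_{\mathrm{M},\nu}\}\right)\right)$, so it suffices to lower bound $\mathbb{E}\{\ln\gamma_{\mathrm{M},\nu}\}$. From \eqref{SNR_MBS}, linearity of expectation splits this quantity into four pieces: the deterministic term $\ln\left(\frac{P_\mathrm{M}}{S}\beta\right)$, the desired-signal fading term $\mathbb{E}\{\ln g_{\mathrm{M},\nu}\}$, the path-loss term $-\eta_\mathrm{M}\,\mathbb{E}\{\ln|X_{o,\mathrm{M}}|\}$, and the interference-plus-noise term $-\mathbb{E}\{\ln(J_{\mathrm{M},\nu}+J_{\mathrm{R},\nu}+B_oN_1)\}$. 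Since the desired fading gain is independent of the serving distance and of the interference, its contribution factors cleanly, while the distance and interference are dependent and will be handled by iterated expectation.

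I would then evaluate the first three pieces exactly. Because $g_{\mathrm{M},\nu}\sim\Gamma(N_\mathrm{M}-S+1,1)$, the log-expectation of a unit-scale Gamma variate gives $\mathbb{E}\{\ln g_{\mathrm{M},\nu}\}=\psi(N_\mathrm{M}-S+1)$. For the path-loss term, under nearest-BS association the serving distance carries the density $2\pi(\lambda_\mathrm{R}+\lambda_\mathrm{M})x\,e^{-\pi(\lambda_\mathrm{R}+\lambda_\mathrm{M})x^2}$ (the same weight appearing in the integrals of \textbf{Theorem 1}); substituting $u=\pi(\lambda_\mathrm{R}+\lambda_\mathrm{M})x^2$ reduces the integral to the log-expectation of a unit exponential, giving $\mathbb{E}\{\ln|X_{o,\mathrm{M}}|\}=\frac12\left(\psi(1)-\ln(\pi(\lambda_\mathrm{R}+\lambda_\mathrm{M}))\right)$. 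This accounts for the $-\frac{\eta_\mathrm{M}}{2}\left(\psi(1)-\ln(\pi(\lambda_\mathrm{R}+\lambda_\mathrm{M}))\right)$ contribution in \eqref{MBS_EC_nu}.

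The main obstacle is the interference-plus-noise log-term, which has no closed form because $J_{\mathrm{M},\nu}+J_{\mathrm{R},\nu}$ is a sum over two independent PPPs of random path-loss-weighted fading gains. The key step is the concavity bound $-\mathbb{E}\{\ln Z\}\geq-\ln\mathbb{E}\{Z\}$, which reduces the task to computing the mean aggregate interference and, importantly, acts in the same lower-bounding direction as the outer Jensen step. To obtain $\mathbb{E}\{Z\}$ I would condition on $|X_{o,\mathrm{M}}|=x$: since the serving MBS is the nearest node across \emph{both} tiers, every interfering MBS and every RRH lies at distance greater than $x$. Campbell's theorem then gives $\mathbb{E}\{J_{\mathrm{M},\nu}\mid x\}=\frac{2\pi P_\mathrm{M}\beta\lambda_\mathrm{M}}{\eta_\mathrm{M}-2}\,x^{2-\eta_\mathrm{M}}$ (using $\mathbb{E}\{g_{\ell,\nu}\}=S$ against the $\frac{P_\mathrm{M}}{S}$ scaling) and the analogous expression for $J_{\mathrm{R},\nu}$ with $\eta_\mathrm{R}$ and unit-mean fading. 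De-conditioning over the serving-distance density, via the same $u$-substitution as before, produces the $\Gamma(2-\eta_\mathrm{M}/2)$ and $\Gamma(2-\eta_\mathrm{R}/2)$ factors together with the $(\pi\lambda_\mathrm{M}+\pi\lambda_\mathrm{R})^{1-\eta/2}$ denominators; note that convergence of these integrals requires $\eta_\mathrm{M},\eta_\mathrm{R}>2$.

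Finally I would assemble the four contributions inside the $\exp(\cdot)$ of \eqref{MBS_EC_nu}, so that $\exp(\mathbb{E}\{\ln\gamma_{\mathrm{M},\nu}\})$ becomes the ratio of the effective desired signal power to the mean interference-plus-noise, and then invoke the digamma asymptotic $\psi(N_\mathrm{M}-S+1)\approx\ln(N_\mathrm{M}-S+1)$ for the large-array regime. The only delicate points are verifying that the exclusion region for the interfering nodes is correctly taken to start at the serving distance and that the two Jensen applications are oriented consistently; the rest is routine integration.
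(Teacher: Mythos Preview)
Your proposal is correct and follows essentially the same route as the paper's proof: an outer Jensen step via the convexity of $t\mapsto\log_2(1+e^t)$, the digamma evaluation for $\mathbb{E}\{\ln g_{\mathrm{M},\nu}\}$, the exponential-substitution computation of $\mathbb{E}\{\ln|X_{o,\mathrm{M}}|\}$, an inner Jensen step replacing $-\mathbb{E}\{\ln Z\}$ by $-\ln\mathbb{E}\{Z\}$, and Campbell's theorem with the exclusion region starting at the serving distance for both interfering tiers. Your explicit remark that the nearest-BS association forces \emph{all} RRHs to lie beyond $|X_{o,\mathrm{M}}|$, and your note that the de-conditioning integrals require $\eta_\mathrm{M},\eta_\mathrm{R}>2$, are points the paper leaves implicit but are exactly what underlies its computation.
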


\begin{proof}
A detailed proof is provided in Appendix~C.
\end{proof}

For MBS transmission over the $\nu$-th RB, the ergodic capacity $\bar{C}_{\mathrm{M},\nu}^{e^*} $ of the most malicious eavesdropper's channel is derived as
\begin{align}\label{Eve_MBS_rate}
\bar{C}_{\mathrm{M},\nu}^{e^*}&=\mathbb{E}\left\{\log_2\left(1+\gamma_{\mathrm{M},\nu}^{e^*}\right)\right\}\nonumber\\
&=\frac{1}{{\ln 2}} \int_0^\infty \frac{1-F_{\gamma_{\mathrm{M},\nu}^{e^*}}\left(x\right)}{1+x} dx,
\end{align}
where $F_{\gamma_{\mathrm{M},\nu}^{e^*}}\left(x\right)$ is given by \eqref{F_gamma_Me} in the next page, which can be easily obtained by following the proof of \textbf{Theorem 2}.
\begin{figure*}
\begin{align}\label{F_gamma_Me}
&F_{\gamma_{\mathrm{M},\nu}^{e^*}}\left(x\right)= \exp\bigg\{-2\pi{\lambda _e}\int_0^\infty  \exp\bigg[-\frac{S{r}^{ {\eta _\mathrm{M}}} x}{{P_\mathrm{M}}\beta}{B_0}{N_1} -\lambda_\mathrm{R}\pi \left(P_\mathrm{R}\beta\right)^{\frac{2}{\eta _\mathrm{R}}}\Gamma\left(1+\frac{2}{\eta _\mathrm{R}}\right) \Gamma\left(1-\frac{2}{\eta _\mathrm{R}}\right)  \nonumber\\
&\left(\frac{S{r}^{ {\eta _\mathrm{M}}}x}{{P_\mathrm{M}}\beta}\right) ^{\frac{2}{\eta _\mathrm{R}}}-2\pi \lambda_\mathrm{M} \sum\limits_{\mu  = 1}^S
{S\choose \mu} \left({{r}^{ {\eta _\mathrm{M}}}x }\right)^{\frac{2}{\eta_\mathrm{M}}}\frac{\Gamma\left(\mu -\frac{2}{\eta_\mathrm{M}}\right) \Gamma\left(-\mu +\frac{2}{\eta_\mathrm{M}}+S\right)}{\eta_\mathrm{M}\Gamma\left(S\right)} \Bigg] r dr \Bigg\}
\end{align}
\hrulefill
\end{figure*}
Based on \textbf{Theorem 3} and \eqref{Eve_MBS_rate}, the ergodic secrecy rate for the typical MBS transmission over the $\nu$-th RB is lower bounded as
\setcounter{equation}{25}\begin{align}\label{MBS_ESR_1}
R_{\mathrm{M},\nu}^{\mathrm{s}}=\left[\bar{C}_{\mathrm{M},\nu}^{\mathrm{L}}-\bar{C}_{\mathrm{M},\nu}^{e^*}\right]^{+}.
\end{align}

\textbf{Remark 3:} \emph{From the results in \eqref{MBS_EC_nu}, \eqref{Eve_MBS_rate}, and \eqref{MBS_ESR_1}, we establish that the ergodic secrecy rate is improved by increasing the number of MBS antennas, due to the fact that only the served legitimate users can obtain the large array gains.}

The area ergodic secrecy rate (in bits/s/m$^2$) of the MBS tier in the heterogeneous C-RAN is calculated as
\begin{align}\label{MBS_tier}
R_{\mathrm{M}}^{\mathrm{s}}=\lambda_{\mathrm{M}}\left(1-\alpha\right) K B_o S R_{\mathrm{M},\nu}^{\mathrm{s}}.
\end{align}

\subsection{{Secrecy Outage Probability}}
In the above, we have studied the secrecy capacity in the massive MIMO aided heterogeneous C-RAN. Since Eves only intercept the secrecy massages passively without any transmissions, the channel state information (CSI) of the eavesdropping channels cannot be obtained by the BSs or legitimate users. In this circumstance, the BSs set the transmission rate $R$ consisting of the secrecy codewords and non-secrecy codewords, and  a constant rate of the secrecy codewords $R_\mathrm{s }\left(R_\mathrm{s} \leq R \right)$. Secrecy outage is declared when the targeted secrecy rate $R_\mathrm{s}$ cannot be guaranteed.

\subsubsection{Delay-Limited Mode}
In the delay-limited mode, a rate $R$ is set under certain connection outage constraint. For RRH transmission over the $k$-th RB  allocated to the RRHs, given a distance $\left| {{X_{{0},\mathrm{R}}}} \right|=d_0$ between a typical RRH and its serving user, the connection outage probability is given by
\begin{align}\label{RRH_COP_1}
P_{\mathrm{R},k}^{\mathrm{co}}\left(R\right)&=\Pr\left(\log_2\left(1+\gamma_{\mathrm{R},k}\right)<R\right)\nonumber\\
&=1-\exp\left(-\frac{B_oN_o}{P_\mathrm{R}\beta}d_o^{\eta_\mathrm{R}}\left(2^{R}-1\right) \right).
\end{align}
For RRH transmission over the $\nu$-th RB shared by RRHs and MBSs, the connection outage probability is
\begin{align}\label{RRH_COP_2}
P_{\mathrm{R},\nu}^{\mathrm{co}}\left(R\right)&=\Pr\left(\log_2\left(1+\gamma_{\mathrm{R},\nu}\right)<R\right)\nonumber\\
&=1-{\bar{F}}_{{\gamma_{\mathrm{R},\nu}}\mid \left\{\left| {{X_{o,{\rm{R}}}}} \right|=d_o\right\}}\left(2^R-1\right),
\end{align}
where ${\bar{F}}_{{\gamma_{\mathrm{R},\nu}}\mid \left\{\left| {{X_{o,{\rm{R}}}}} \right|=x\right\}}\left(\cdot\right)$ is given by \eqref{CDF_R_secrecy_rate}.

\textbf{Remark 4:} \emph{From \eqref{RRH_COP_1} and \eqref{RRH_COP_2}, we see that when a typical RRH transmits information to its served user, deploying more RRHs in its surrounding area has no effect on the quality of connectivity, since the inter-RRH interference is mitigated. }

\begin{corollary}
Given a distance  $\left| {{X_{{o},\mathrm{R}}}} \right|=d_o$ and the connection outage probability threshold $\sigma$, the typical RRH transmission rate over the $k$-th RB allocated to RRHs is given by
\setcounter{equation}{29}\begin{align}\label{DL_RRH_1}
R= {\log _2}\left( {1 - \frac{{{P_{\rm{R}}}\beta }}{{{B_o}{N_o}{d_o^{{\eta _{\rm{R}}}}}}}\ln \left( {1 - \sigma } \right)} \right),
\end{align}
and the typical RRH transmission rate over the $\nu$-th RB shared by RRHs and MBSs satisfies
\begin{align}\label{coro_RRH_rate2}
R \geq
\log_2\left(1+\frac{{{P_\mathrm{R}}S}}{{{P_\mathrm{M}}{d_o^{{\eta _R}}}}} \Delta_1^{\frac{{{\eta _\mathrm{M}}}}{2}} \right),
\end{align}
with
\begin{align}\label{coro_RRH_rate2_gamma}
\Delta_1=
 { - \frac{{{\eta _\mathrm{M}}\Gamma \left( S \right)\ln \left( {1 - \sigma } \right)}}{{2\pi {\lambda_\mathrm{M}}\sum\limits_{\mu  = 1}^S {{S\choose
\mu}
\Gamma \left( {\mu  - \frac{2}{{{\eta _\mathrm{M}}}}} \right)\Gamma \left( { - \mu  + \frac{2}{{{\eta _\mathrm{M}}}} + S} \right)} }}}.
\end{align}

\end{corollary}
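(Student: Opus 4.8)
The plan is to obtain each transmission rate by setting the relevant connection outage probability equal to the threshold $\sigma$ and inverting for $R$, using the closed forms already established in \eqref{RRH_COP_1} and \eqref{RRH_COP_2}. Equation \eqref{DL_RRH_1} is a direct inversion: starting from \eqref{RRH_COP_1} I would impose $P_{\mathrm{R},k}^{\mathrm{co}}(R)=\sigma$, i.e. $1-\exp(-\frac{B_oN_o}{P_\mathrm{R}\beta}d_o^{\eta_\mathrm{R}}(2^R-1))=\sigma$. Rearranging gives $\exp(-\frac{B_oN_o}{P_\mathrm{R}\beta}d_o^{\eta_\mathrm{R}}(2^R-1))=1-\sigma$; taking logarithms and solving the resulting linear equation for $2^R-1$ yields $2^R=1-\frac{P_\mathrm{R}\beta}{B_oN_o d_o^{\eta_\mathrm{R}}}\ln(1-\sigma)$, which is exactly \eqref{DL_RRH_1}. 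No approximation is needed here because \eqref{RRH_COP_1} is already in closed form.

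For the $\nu$-th RB the same idea applies, but the inversion cannot be carried out exactly, which is why the corollary states a bound rather than an equality. Using \eqref{RRH_COP_2}, the condition $P_{\mathrm{R},\nu}^{\mathrm{co}}(R)=\sigma$ is equivalent to $\bar{F}(2^R-1)=1-\sigma$, where $\bar{F}$ is the conditional survival function \eqref{CDF_R_secrecy_rate} evaluated at $|X_{o,\mathrm{R}}|=d_o$. Since $\bar{F}$ carries both the thermal-noise exponential and the incomplete beta function $\mathrm{B}_{(\cdot)}[\cdot,\cdot]$, it is not invertible in closed form. The enabling observation is that $\bar{F}$ is monotonically decreasing in its argument, so if I replace it by a tractable lower bound $\bar{F}_{\mathrm{LB}}\le\bar{F}$ and solve $\bar{F}_{\mathrm{LB}}(2^R-1)=1-\sigma$, the resulting rate lower-bounds the true operating rate.

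I would build $\bar{F}_{\mathrm{LB}}$ in two steps. First, in the interference-limited regime I neglect the noise factor $\exp(-\frac{B_oN_o}{P_\mathrm{R}\beta}d_o^{\eta_\mathrm{R}}\gamma)$. Second, I replace the guard-region-truncated MBS interference contribution in \eqref{CDF_R_secrecy_rate} (the source of the incomplete beta function, which arises because nearest-BS association keeps the interfering MBSs outside a disc around the user) by the untruncated, full-plane contribution, i.e. the very term $2\pi\lambda_\mathrm{M}\sum_{\mu=1}^S\binom{S}{\mu}(\frac{d_o^{\eta_\mathrm{R}}\gamma P_\mathrm{M}}{P_\mathrm{R}S})^{2/\eta_\mathrm{M}}\frac{\Gamma(\mu-\frac{2}{\eta_\mathrm{M}})\Gamma(-\mu+\frac{2}{\eta_\mathrm{M}}+S)}{\eta_\mathrm{M}\Gamma(S)}$ that already appears in \eqref{X3_34_overline_F}. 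Allowing interferers arbitrarily close over-counts the aggregate interference and therefore decreases $\bar{F}$, which is what makes $\bar{F}_{\mathrm{LB}}$ a genuine lower bound (the prefactor powers $(\cdot)^{\mu}(\cdot)^{-\mu+2/\eta_\mathrm{M}}$ in \eqref{CDF_R_secrecy_rate} conveniently collapse to the single power $\gamma^{2/\eta_\mathrm{M}}$). Setting the resulting single exponential equal to $1-\sigma$ and taking logarithms isolates $(\frac{d_o^{\eta_\mathrm{R}}\gamma P_\mathrm{M}}{P_\mathrm{R}S})^{2/\eta_\mathrm{M}}=\Delta_1$, with $\Delta_1$ exactly as in \eqref{coro_RRH_rate2_gamma}; solving for $\gamma=2^R-1$ then gives $2^R-1=\frac{P_\mathrm{R}S}{P_\mathrm{M}d_o^{\eta_\mathrm{R}}}\Delta_1^{\eta_\mathrm{M}/2}$, i.e. \eqref{coro_RRH_rate2}.

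The main obstacle is justifying the second step as a bound rather than a mere approximation: one must argue that swapping the incomplete beta function (truncated interference) for the complete gamma-function form (full-plane interference) genuinely enlarges the interference and hence lowers $\bar{F}$. Care is also required because dropping the noise term by itself pushes $\bar{F}$ in the opposite direction, so the interference-limited assumption must be invoked explicitly to neglect noise while still claiming the inequality $R\ge\cdots$ in \eqref{coro_RRH_rate2}.
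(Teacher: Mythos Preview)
Your proposal is correct and follows essentially the same approach as the paper: direct inversion of \eqref{RRH_COP_1} for the $k$-th RB, and for the $\nu$-th RB, neglecting the noise and replacing the guard-region-truncated MBS interference by its full-plane version to obtain a tractable lower bound on $\bar F$ which is then inverted. You are in fact more careful than the paper in flagging that dropping the noise factor pushes $\bar F$ upward and must be justified separately as an interference-limited approximation; the paper simply states ``the noise can be ignored compared with the inter-tier interference'' without further comment.
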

\begin{proof}
A detailed proof is provided in Appendix D.
\end{proof}

Similar to \eqref{RRH_COP_2}, given a distance  $\left| {{X_{{o},\mathrm{M}}}} \right|=d_o$ between a typical MBS and its served user, we obtain the connection outage probability of MBS transmission as
\begin{align}\label{MBS_COP}
 \mathrm{P}_{\mathrm{M},\nu}^{\mathrm{co}}\left(R\right)= 1-{\bar{F}}_{{\gamma_{\mathrm{M},\nu}}\mid \left\{\left| {{X_{o,{\rm{M}}}}} \right|=d_o\right\}}\left(2^{R}-1\right),
\end{align}
where  ${\bar{F}}_{{\gamma_{\mathrm{M},\nu}}\mid \left\{\left| {{X_{o,{\rm{M}}}}} \right|=d_o\right\}}\left(\cdot\right)$ is  the complementary cumulative distribution function (CCDF) of the receive SINR $\gamma_{\mathrm{M},\nu}$ at the MBS. However,  the exact expression for ${\bar{F}}_{{\gamma_{\mathrm{M},\nu}}\mid \left\{\left| {{X_{o,{\rm{M}}}}} \right|=d_o\right\}}\left(\cdot\right)$ involves higher order derivatives of laplace transform using $\mathrm{Fa\grave{a}}$
di Bruno's formula~\cite{Dhillon2013_HetNets}, which becomes inefficient for large number of MBS antennas. By the law of large numbers, i.e., $ g_{\mathrm{M},\nu} \approx N_\mathrm{M}-S+1$ as $N_\mathrm{M}$ is large, and employing Gil-Pelaez theorem~\cite{Renzo_comml_2014}, we have
\begin{align}\label{CCDF_MBS}
&{\bar{F}}_{{\gamma_{\mathrm{M},\nu}}\mid \left\{\left| {{X_{o,{\rm{M}}}}} \right|=d_o\right\}}\left(\gamma\right)= \Pr\Big(\frac{\frac{P_\mathrm{M}}{S}\left(N_\mathrm{M}-S+1\right)\beta d_o^{-\eta_\mathrm{M}}}{J_{\mathrm{M},\nu}+J_{\mathrm{R},\nu}+B_o N_1} > \gamma \Big)\nonumber\\
&=\Pr\Big(J_{\mathrm{M},\nu}+J_{\mathrm{R},\nu}< \Big(\frac{P_\mathrm{M}}{S \gamma}\left(N_\mathrm{M}-S+1\right)\beta d_o^{-\eta_\mathrm{M}}-B_o N_1\Big)\Big)\nonumber\\
&= \frac{1}{2}-\frac{1}{\pi}\int_0^\infty \frac{\mathrm{Im}\big[e^{-jw\big(\frac{P_\mathrm{M}\beta}{Sd_o^{\eta_\mathrm{M}} \gamma}\left(N_\mathrm{M}-S+1\right) -B_o N_1\big)}\varphi^{*}\left(w\right)\big]}{w}dw,
\end{align}
where  $j=\sqrt{-1}$, $\varphi\left(w\right)$ is the conjugate of the characteristic function given by \eqref{Varphi_w} (See top of this page), which can be easily obtained by following the similar approach in Appendix A. In \eqref{Varphi_w},  $_2{F_1}\left[\cdot,\cdot;\cdot;\cdot\right]$ is the Gauss hypergeometric function~\cite[(9.142)]{gradshteyn}.
\begin{figure*}
\normalsize
\begin{align}\label{Varphi_w}
&\varphi\left(w\right)=\exp \bigg( -  2 \pi {\lambda_\mathrm{R}} \frac{jw P_\mathrm{R} \beta d_o^{2-\eta_\mathrm{R}}}{\eta_\mathrm{R}-2} {}_2{F_1}\left[ 1,{\frac{{\eta_\mathrm{R}- 2}}{{{\eta_\mathrm{R}}}};2 - \frac{2}{{{\eta_\mathrm{R}}}}; - jw P_\mathrm{R}\beta {d_o^{ - {\eta_\mathrm{R}}}}} \right]- {\lambda _\mathrm{M}}2\pi \sum\limits_{\mu  = 1}^S
 \nonumber\\
&{{
S\choose
\mu
}{{\left( {jw \frac{{ {P_{\mathrm{M}}}}}{{S}}}\beta \right)}^\mu }}  \frac{{{{\left( { - jw \frac{{
{P_{\mathrm{M}}}}}{{S}}}  \beta \right)}^{ - \mu  + \frac{2}{{{\eta _\mathrm{M}}}}}}}}{{{\eta_\mathrm{M}}}}{B_{\left(- {jw \frac{{ {P_{\mathrm{M}}}}}{{S}} \beta
d_o^{-\eta _\mathrm{M}}} \right)}}\left[
{\mu  - \frac{2}{{{\eta_\mathrm{M}}}},1 - S} \right]\bigg)
\end{align}
\hrulefill
\end{figure*}


Secrecy outage occurs when the equivocation rate of Eve is lower than the secrecy rate $R_\mathrm{s}$. Thus, the secrecy outage probability can be written in a general form as
\setcounter{equation}{35}\begin{align}\label{RRH_exp_SOP_DL}
P_\mathrm{s}&=\Pr\left(R-\log_2\left(1+ \gamma_{\vartheta,i}^{e^*} \right)<R_\mathrm{s}\right)\nonumber\\
&=1- F_{\gamma_{\vartheta,i}^{e^*}}\left(2^{R-R_\mathrm{s}}-1\right),
\end{align}
where $F_{\gamma_{\vartheta,i}^{e^*}} \left(\cdot\right)$  is the CDF of the SINR $\gamma_{\vartheta,i}^{e^*}$ at the most malicious Eve. Note that here,
{ $F_{\gamma_{\vartheta,i}^{e^*}} \left(\cdot\right)$ is given by \eqref{X3_34_overline_F_k}, \eqref{X3_34_overline_F} and \eqref{F_gamma_Me}
 for RRH transmissions ($\vartheta={\rm{R}}, i \in \left\{k,\nu\right\}$)  and MBS transmissions ($\vartheta={\rm{M}}, i =\nu$), respectively.}  

\textbf{Remark 5:} \emph{From \eqref{RRH_exp_SOP_DL}, we see that for eavesdroppers, deploying more RRHs and MBSs results in more interference, which degrades the eavesdropping channel, thereby decreasing the secrecy outage probability.}

\subsubsection{Delay-Tolerant Mode}
In the delay-tolerant mode, coding can be operated over a sufficient number of independent channel realizations to experience the whole ensemble of the channel, and therefore the transmission rate $R$ can be set as an arbitrary value less than or equal to the ergodic capacity of the channel between the legitimate user and its serving RRH/MBS~\cite{junzhu2014}. The secrecy outage occurs when the targeted ergodic secrecy rate $R_\mathrm{s}$ cannot be satisfied, i.e.,
\begin{align}\label{RRH_exp_SOP}
R-R_e<R_\mathrm{s},
\end{align}
where $R_e$ denotes the ergodic capacity of the most malicious eavesdropper's channel. When intercepting the RRH transmission, $R_e=\bar{C}_{\mathrm{R},i}^{e^*} (i \in \left\{k, \nu\right\})$ given by \eqref{Eve_Ergodic_capacity_overline_k} and \eqref{Eve_Ergodic_capacity_overline} respectively; and when intercepting the MBS transmission, $R_e=\bar{C}_{\mathrm{M},\nu}^{e^*}$ given by \eqref{Eve_MBS_rate}. As mentioned in \textbf{Remark 5}, $R_e$ decreases with increasing the densities of RRHs and MBSs, due to more severe interference.

It is indicated from \eqref{RRH_exp_SOP} that given a secrecy rate $R_\mathrm{s}$, the rate $R$ should be set as large as possible to avoid the secrecy outage. Based on {\bf{Theorem 1}}, i) RRH transmission over the $k$-th RB  allocated to RRHs, the transmission rate $R$ at RRH (bits/s/Hz) satisfies $R\leq \bar{C}_{\mathrm{R},k}$ with $\bar{C}_{\mathrm{R},k}$ given by \eqref{RRH_Ergodic_rate_1}, and ii) RRH transmission over the $\nu$-th RB shared by the RRHs and MBSs, $R$ at RRH satisfies $R \leq \bar{C}_{\mathrm{R},\nu}$ with $\bar{C}_{\mathrm{R},\nu}$ given by \eqref{RRH_Ergodic_rate_2}. Based on {\bf{Theorem 3}}, for MBS transmission over the $\nu$-th RB shared by the RRHs and MBSs, The value for $R$ at MBS can be at least set as $R = \bar{C}_{\mathrm{M},\nu}^{\mathrm{L}}$ with $\bar{C}_{\mathrm{M},\nu}^{\mathrm{L}}$ given by \eqref{MBS_EC_nu}.

\section{{Energy Efficiency Analysis}}
One of the 5G goals is to achieve $10{\rm x}$ reduction in energy consumption~\cite{5GPPP_2015}. As such, EE is a very important performance metric. In this section, we proceed to examine the EE concern in the massive MIMO aided heterogeneous C-RAN.\footnote{Note that because the CSI of the eavesdropping channels is unknown, joint design of combining both EE and secrecy is not feasible.}

The EE for transmission from a typical RRH  is given by
\begin{align}\label{EE_RRH}
\mathrm{EE}_{\mathrm{RRH}}&=\frac{\mathrm{throughput}}{\mathrm{Power~Consumption}}\nonumber\\
&=\frac{\alpha K B_o \bar{C}_{\mathrm{R},k}+\left(1-\alpha\right) K B_o \bar{C}_{\mathrm{R},\nu}}{P_\mathrm{R}^{total}},
\end{align}
where $\bar{C}_{\mathrm{R},k}$ and $\bar{C}_{\mathrm{R},\nu}$ are given by \eqref{RRH_Ergodic_rate_1} and \eqref{RRH_Ergodic_rate_2}, respectively, based on {\bf{Theorem 1}}.
In the RRH tier, transmission over RBs that are only allocated to RRHs plays a dominant role in the overall throughput~\cite{Peng_M_2015},
compared to using RBs shared by the RRHs and MBSs. As a consequence, \eqref{EE_RRH}
can be approximately evaluated as 
\begin{align}\label{EE_RRH_Approx}
\mathrm{EE}_{\mathrm{RRH}}&\mathop \approx \limits^{\left(a\right)} \frac{\alpha B_o \bar{C}_{\mathrm{R},k}}{ \frac{{{P_\mathrm{R}}}}{\varepsilon_\mathrm{R}}},
\end{align}
where (a) is obtained by omitting the power consumptions from static hardware and fronthaul link, compared to the RRH transmit power. It is implied from \eqref{EE_RRH_Approx} that the EE for RRH transmission can be linearly improved by allocating more RBs to the RRHs. From \eqref{RRH_Ergodic_rate_1}, we note that $\bar{C}_{\mathrm{R},k}$ increases with density of RRHs and MBSs. Hence we have the following corollary:
\begin{corollary}\label{col_2}
EE for RRH transmission is improved by increasing the density of RRHs and MBSs in the heterogeneous C-RAN, due to the fact that the distance between the user and its associated RRH is shorter, hence increasing the throughput.
\end{corollary}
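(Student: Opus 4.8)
The plan is to show that $\mathrm{EE}_{\mathrm{RRH}}$ in \eqref{EE_RRH} is strictly increasing in both $\lambda_\mathrm{R}$ and $\lambda_\mathrm{M}$. The key structural observation is that the power consumption $P_\mathrm{R}^{total}$ in \eqref{RRH_power_cost}, and hence the denominator of \eqref{EE_RRH}, is independent of the node densities. It therefore suffices to prove that the throughput in the numerator grows with the densities. Working with the approximation \eqref{EE_RRH_Approx} — justified in the text by the dominance of the dedicated RRH RBs over the shared ones — this reduces the whole statement to showing that $\bar{C}_{\mathrm{R},k}$ in \eqref{RRH_Ergodic_rate_1} is an increasing function of $\lambda_\mathrm{R}$ and $\lambda_\mathrm{M}$. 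I would deliberately stay with \eqref{EE_RRH_Approx} rather than the exact \eqref{EE_RRH}, because the shared-RB term $\bar{C}_{\mathrm{R},\nu}$ in \eqref{RRH_Ergodic_rate_2} involves the competing effect of increased MBS interference when $\lambda_\mathrm{M}$ grows, whereas the dedicated-RB term is free of inter-tier interference and exhibits clean monotonicity.

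First I would note that $\bar{C}_{\mathrm{R},k}$ depends on the densities only through the sum $\lambda := \lambda_\mathrm{R} + \lambda_\mathrm{M}$, which is exactly the intensity governing the serving-distance law $2\pi\lambda\,x\,e^{-\pi\lambda x^2}$ of the Poisson-Voronoi tessellation. Writing $a := B_o N_o/(P_\mathrm{R}\beta)$ and applying the substitution $u = \sqrt{\lambda}\,x$, the leading factor $\lambda$ is absorbed and \eqref{RRH_Ergodic_rate_1} becomes
\begin{align}
\bar{C}_{\mathrm{R},k} = \frac{2\pi}{\ln 2}\int_0^\infty f\!\left(a\,u^{\eta_\mathrm{R}}\lambda^{-\eta_\mathrm{R}/2}\right) u\, e^{-\pi u^2}\, du, \qquad f(t):= e^{t}\,\Gamma\!\left(0,t\right).
\end{align}
In this form the entire dependence on the densities is confined to the argument $t = a\,u^{\eta_\mathrm{R}}\lambda^{-\eta_\mathrm{R}/2}$ of the single-variable function $f$, and since $\eta_\mathrm{R} > 0$ this argument is strictly decreasing in $\lambda$ for every fixed $u > 0$.

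The hard part — and really the only genuine obstacle — is to establish that $f(t) = e^{t}\Gamma(0,t)$ is itself monotonically decreasing, so that decreasing its argument raises its value. Using $\Gamma(0,t) = \int_t^\infty s^{-1}e^{-s}\,ds$ one gets $\frac{d}{dt}\Gamma(0,t) = -e^{-t}/t$, whence $f'(t) = e^{t}\Gamma(0,t) - 1/t$. I would close this with the elementary bound $\Gamma(0,t) = \int_t^\infty s^{-1}e^{-s}\,ds < t^{-1}\int_t^\infty e^{-s}\,ds = e^{-t}/t$, valid for all $t>0$ since $s^{-1} < t^{-1}$ on the domain of integration; this gives $e^{t}\Gamma(0,t) < 1/t$ and hence $f'(t) < 0$. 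Combining the decreasing outer map $f$ with the decreasing inner argument via the chain rule, and differentiating under the integral sign — legitimate because the integrand has only a logarithmic singularity at $u=0$ and Gaussian decay at infinity — yields $\partial \bar{C}_{\mathrm{R},k}/\partial\lambda > 0$. Since $\partial\lambda/\partial\lambda_\mathrm{R} = \partial\lambda/\partial\lambda_\mathrm{M} = 1$, the ergodic capacity, the throughput, and thus $\mathrm{EE}_{\mathrm{RRH}}$ all strictly increase in each density. This is precisely the analytical counterpart of the stated intuition: a larger $\lambda$ stochastically shrinks the serving distance $|X_{o,\mathrm{R}}|$, lowering the pathloss and raising the achievable rate at fixed transmit power.
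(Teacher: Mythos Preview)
Your argument is correct and follows the same route the paper takes: use the approximation \eqref{EE_RRH_Approx} so that the denominator is density-free, and then appeal to the monotonicity of $\bar{C}_{\mathrm{R},k}$ in $\lambda_\mathrm{R}+\lambda_\mathrm{M}$. The paper, however, does not prove the corollary at all---it simply asserts just before the statement that ``$\bar{C}_{\mathrm{R},k}$ increases with density of RRHs and MBSs'' and folds the heuristic justification (shorter serving distance) into the corollary itself; your substitution $u=\sqrt{\lambda}\,x$ and the bound $e^{t}\Gamma(0,t)<1/t$ supply the rigorous monotonicity argument that the paper leaves implicit.
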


Likewise, the EE for transmission from a typical MBS can at least achieve
\begin{align}\label{EE_MBS}
\mathrm{EE}_{\mathrm{MBS}}&=\frac{\left(1-\alpha\right) K B_o S \bar{C}_{\mathrm{M},\nu}^{\mathrm{L}} }{P_\mathrm{M}^{total}}\nonumber\\
&\mathop \approx \limits^{\left(b\right)} \frac{ B_o S \bar{C}_{\mathrm{M},\nu}^{\mathrm{L}} }{\frac{{{P_\mathrm{M}}}}{\varepsilon_\mathrm{M} } + \sum\limits_{\rho = 1}^3 {\left( {{\left(S\right)^\rho}{\Lambda_{\rho,0}} + {\left(S\right)^{\left(\rho-1\right)}}N_\mathrm{M} {\Lambda_{\rho,1}}} \right)} },
\end{align}
where { $P_\mathrm{M}^{total}$ represents the total power consumption at each MBS given by \eqref{power_cost_MBS}}, (b) is obtained by the fact that the power consumptions from static hardware and backhaul link are negligible compared to the massive MIMO processing. In \eqref{EE_MBS}, $\bar{C}_{\mathrm{M},\nu}^{\mathrm{L}}$  is given by \eqref{MBS_EC_nu}, based on {\bf{Theorem 3}}. From \eqref{EE_MBS}, we see that S-FFR has negligible effect on the EE of MBS transmission.

In light of the aforementioned, we conclude that the EE of the massive MIMO enabled heterogeneous C-RAN is improved by increasing the RRH density and RBs only used by RRHs.


We next evaluate the EE of this network. Using the \textbf{Theorem 1} and \textbf{Theorem 3} in Section III, we know that the EE of the massive MIMO enabled heterogeneous C-RAN can at lease achieve
\begin{align}\label{EE_network}
&\mathrm{EE}_{\mathrm{Net}}=\frac{\mathrm{Area~throughput~of~the~network}}{\mathrm{Area~Power~Consumption~of~the~network}}\nonumber\\
&=\frac{\lambda _\mathrm{R}\alpha K B_o \bar{C}_{\mathrm{R},k}+\left(1-\alpha\right) K B_o \left( \lambda _\mathrm{R} \bar{C}_{\mathrm{R},\nu} +\lambda _\mathrm{M} S \bar{C}_{\mathrm{M},\nu}^{\mathrm{L}}\right) }{\lambda _\mathrm{R} P_\mathrm{R}^{total}+\lambda _\mathrm{M}P_\mathrm{M}^{total}}.
\end{align}
\section{Numerical Results}\label{SR_2016}
In this section, we present numerical results to evaluate the secrecy and EE of the massive MIMO enabled heterogeneous C-RAN (abbreviated as Het C-RAN in the figures). We consider a circular region with radius $1\times 10^4$ m.   Such a network is assumed to operate at a carrier frequency of 1 GHz, with the MBS transmit power $P_\mathrm{M} = 40$ dBm, the RRH transmit power $P_\mathrm{R}= 30$ dBm, the RB bandwidth $\mathrm{B}_0=800$ kHz, and the total number of RBs $K=25$. The  power spectrum densities are $N_0=N_1=N_e=-162$ dBm/Hz~\cite{mugen_poor_2015}. The static hardware power consumption  for RRH and HPN are $P_\mathrm{R}^0=$0.1 W and $P_\mathrm{M}^0=$10 W, respectively, and the power consumption of the fronthaul link and backhaul link are $P_\mathrm{fh}=P_\mathrm{bh}=0.2$ W. We set the coefficients for power consumption under ZFBF precoding in \eqref{power_cost_MBS} as $P_\mathrm{M}^0=4$ W, $\Lambda_{1,0}=4.8$, $\Lambda_{2,0}=0$, $\Lambda_{3,0}=2.08 \times 10^{-8}$, $\Lambda_{1,1}=1$, $\Lambda_{2,1}=9.5 \times 10^{-8}$ and $\Lambda_{3,1}=6.25 \times 10^{-8}$ \cite{BE2014}. {In the simulation results, the values of MBS and RRH density are set based on the macro inter-site distance (ISD) in 3GPP model~\cite{3GPP_model_2010}}.

\subsection{The Effects of Massive MIMO}
\begin{figure}
    \begin{center}
        \includegraphics[width=3.1 in]{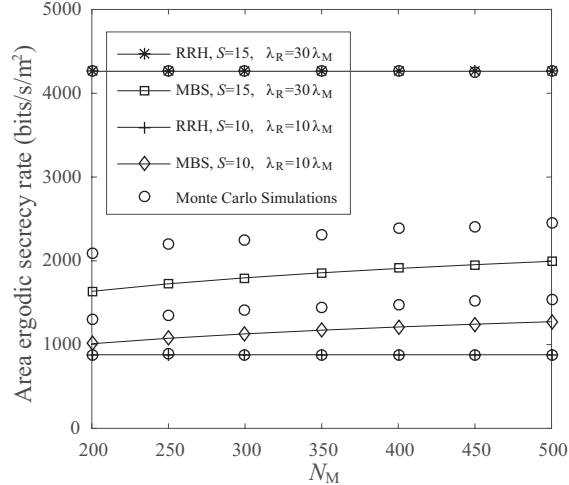}
        \caption{Effects of massive MIMO on the area ergodic secrecy rate: $\lambda_\mathrm{M}=\left(500^2 \times \pi\right)^{-1} \mathrm{m}^{-2}$, $\lambda_e=10^{-5}$ m$^{-2}$, $\eta_\mathrm{M}=3.0$, $\eta_\mathrm{R} = 3.6$, and $\alpha=0.5$.}
        \label{Fig1}
    \end{center}
\end{figure}

Fig.~\ref{Fig1} analyzes the effects of massive MIMO on the area ergodic secrecy rate. The analytical curves for area ergodic secrecy rate of the  RRH tier were obtained from \eqref{RRH_tier}, which have a precise match to the results obtained using the Monte-Carlo simulations marked by `$\circ$'. The lower bound curves for area ergodic secrecy rate of the MBS tier were obtained from using \eqref{MBS_tier}, which can efficiently predict the performance behavior. As mentioned in \textbf{Remark 3} of Section III-A, we observe that the area ergodic secrecy rate increases with the number of MBS antennas, due to more array gains obtained by the legitimate user. Increasing the number of served users can also significantly improve the ergodic secrecy rate. The area ergodic secrecy rate of the  RRH tier remains unchanged with increasing the number of MBS antennas, since employing more MBS antennas will not cause more interference in the network. Nevertheless, it will substantially increase with the density of RRHs.
\begin{figure}
    \begin{center}
        \includegraphics[width=3.1 in]{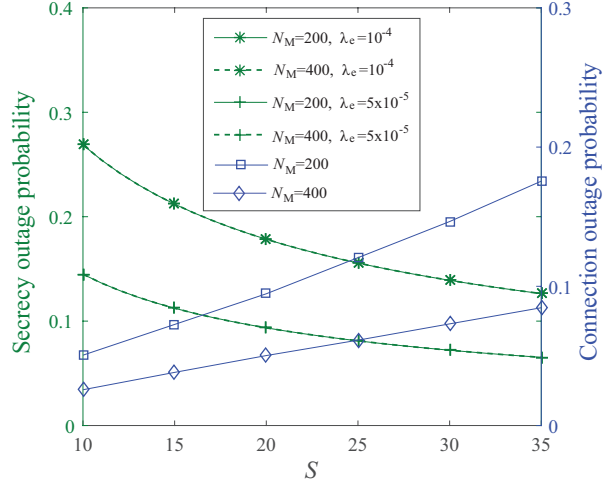}
        \caption{The secrecy outage probability and connection outage probability for MBS transmission in delay-limited mode: $R= 8$ bits/s/Hz, $R_\mathrm{s}=0.3 R$, $d_o=50$ m, $\lambda_\mathrm{M}=\left(500^2 \times \pi\right)^{-1} \mathrm{m}^{-2}$, $\lambda_\mathrm{R}= 20 \times \lambda_\mathrm{M}$,  $\eta_\mathrm{M}=3.0$ and $\eta_\mathrm{R} = 3.6$.}
        \label{Fig2}
    \end{center}
\end{figure}
Fig.~\ref{Fig2} provides the results for the secrecy outage probability and connection outage probability of MBS transmission operating in delay-limited mode. With increasing the number of served users, the secrecy outage probability decreases and  the connection outage probability increases. The reason is that the transmit power allocated to each user data stream decreases when serving more users at the MBS, which in turn decreases the receive SINR at both the legitimate user and eavesdroppers. To decrease the connection outage probability without altering the secrecy outage probability, MBSs can be equipped with more antennas to provide larger array gains for the legitimate users. In addition, it is obvious that more eavesdroppers will deteriorate the secrecy performance.

\begin{figure}
    \begin{center}
        \includegraphics[width=3.3 in]{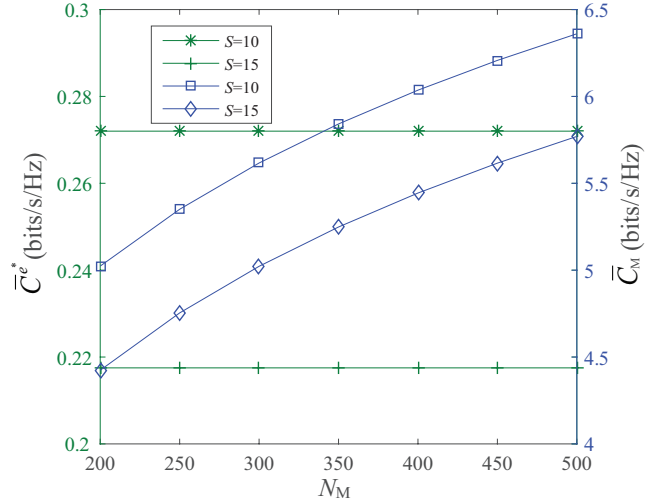}
        \caption{The ergodic capacity $\bar{C}^{e^*}$ of the most malicious eavesdropper's channel  and the ergodic capacity $\bar{C}_\mathrm{M}$ of the macrocell user's channel  for MBS transmission in delay-tolerant mode: $\lambda_\mathrm{M}=\left(500^2 \times \pi\right)^{-1} \mathrm{m}^{-2}$, $\lambda_\mathrm{R}= 20 \times \lambda_\mathrm{M}$, $\lambda_e=10^{-5}$ m$^{-2}$, $\eta_\mathrm{M}=3.3$.}
        \label{Fig3}
    \end{center}
\end{figure}

Fig.~\ref{Fig3} shows the ergodic capacity $\bar{C}^{e^*}$ of the most malicious eavesdropper's channel  and the ergodic capacity $\bar{C}_\mathrm{M}$
  of the macrocell user's
channel for MBS transmission in delay-tolerant mode. 
When adding more MBS antennas, $\bar{C}^{e^*}$ is unaltered and $\bar{C}_\mathrm{M}$ experiences a substantial increase,  since only the legitimate macrocell users can obtain the array gains. Additionally, serving more users at the MBS decreases $\bar{C}^{{e^{\rm{*}}}}$ and $\bar{C}_\mathrm{M}$, because of lower transmit power per user data stream at the MBS as mentioned in  Fig~\ref{Fig2}.
\begin{figure}
    \begin{center}
        \includegraphics[width=3.1 in]{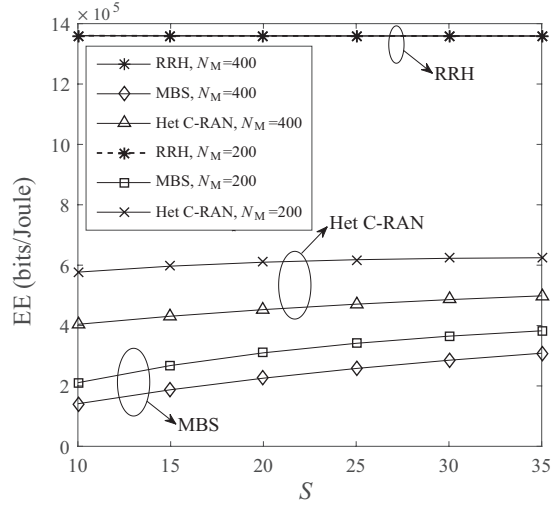}
        \caption{Effects of massive MIMO on the EE: $\lambda_\mathrm{M}=\left(500^2 \times \pi\right)^{-1} \mathrm{m}^{-2}$, $\lambda_\mathrm{R}= 20 \times \lambda_\mathrm{M}$, $\eta_\mathrm{M}=3.2$, $\eta_\mathrm{R} = 3.6$, and $\alpha=0.5$.}
        \label{Fig4}
    \end{center}
\end{figure}

Fig.~\ref{Fig4} illustrates the effects of massive MIMO on the EE. Results indicate that the EE of MBS transmission is improved by serving more users at the MBS, which is attributed to the fact that more multiplexing gain is achieved. Although adding more antennas at the MBS can provide a large array gain, there is a significant increase in power consumption resulting from massive MIMO baseband processing, which decreases the EE of MBS transmission. In addition, the RRHs achieve  higher EE than the MBSs, as they use lower transmit power and do not consume power for baseband processing, and results also demonstrate that massive MIMO has negligible effect on the EE of RRH transmission.

\subsection{The Effects of RRH Density}

\begin{figure}
    \begin{center}
        \includegraphics[width=3.3 in]{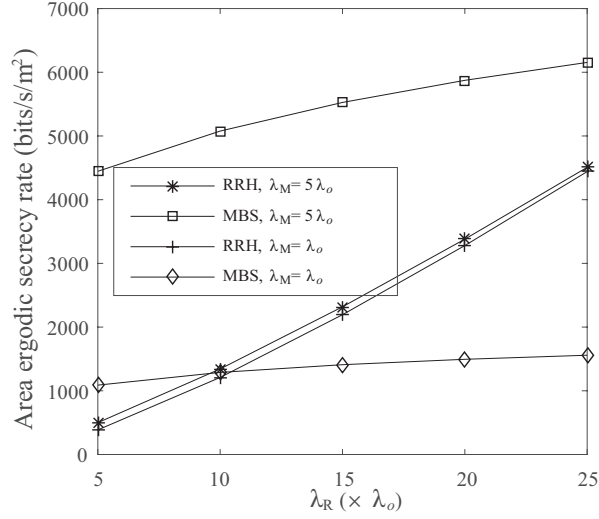}
        \caption{Effects of RRH density on area ergodic secrecy rate: $\lambda_o=\left(500^2 \times \pi\right)^{-1} \mathrm{m}^{-2}$,  $\lambda_e=10^{-5} \mathrm{m}^{-2}$, $N_\mathrm{M}=400$, $S=30$, $\eta_\mathrm{M}=3.0$, $\eta_\mathrm{R} = 3.6$, and $\alpha=0.7$.}
        \label{Fig5}
    \end{center}
\end{figure}
\begin{figure}[t!]
\centering
\subfigure[ $R_s=0.2 R$, $d_o=30$ m, $\lambda_o=\left(500^2 \times \pi\right)^{-1} \mathrm{m}^{-2}$, $\lambda_e=10^{-4} \mathrm{m}^{-2}$, $\eta_\mathrm{R} = 3.6$.]{
\label{Fig.sub.61}
\includegraphics[width=2.8 in]{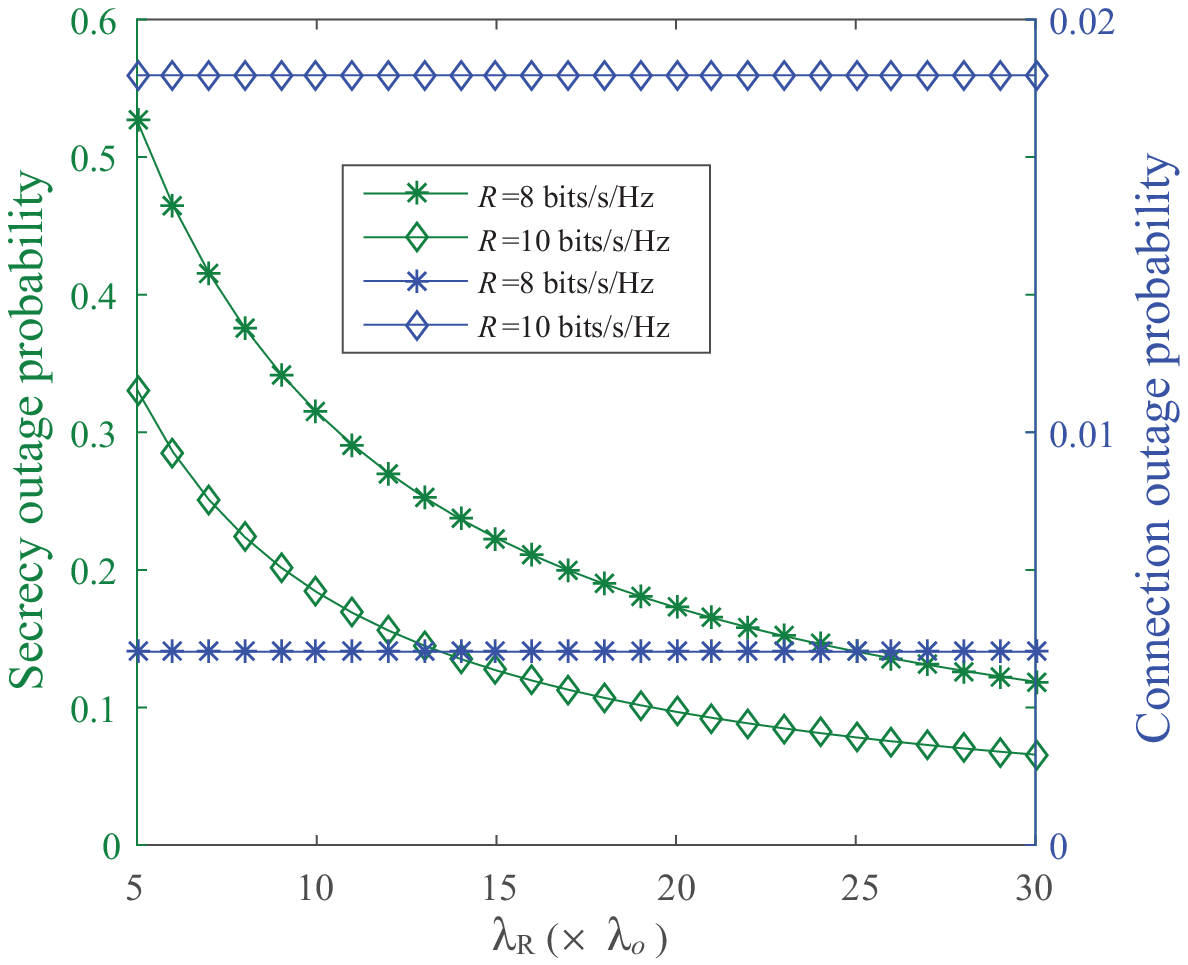}}
\subfigure[ $R_s=0.2 R$, $d_o=30$ m, $\lambda_\mathrm{M}=\lambda_o=\left(500^2 \times \pi\right)^{-1} \mathrm{m}^{-2}$, $\lambda_e=10^{-4} \mathrm{m}^{-2}$, $N_\mathrm{M}=200$, $S=15$, $\eta_\mathrm{M}=3.0$, and $\eta_\mathrm{R} = 3.6$.]{
\label{Fig.sub.62}
\includegraphics[width=2.8 in]{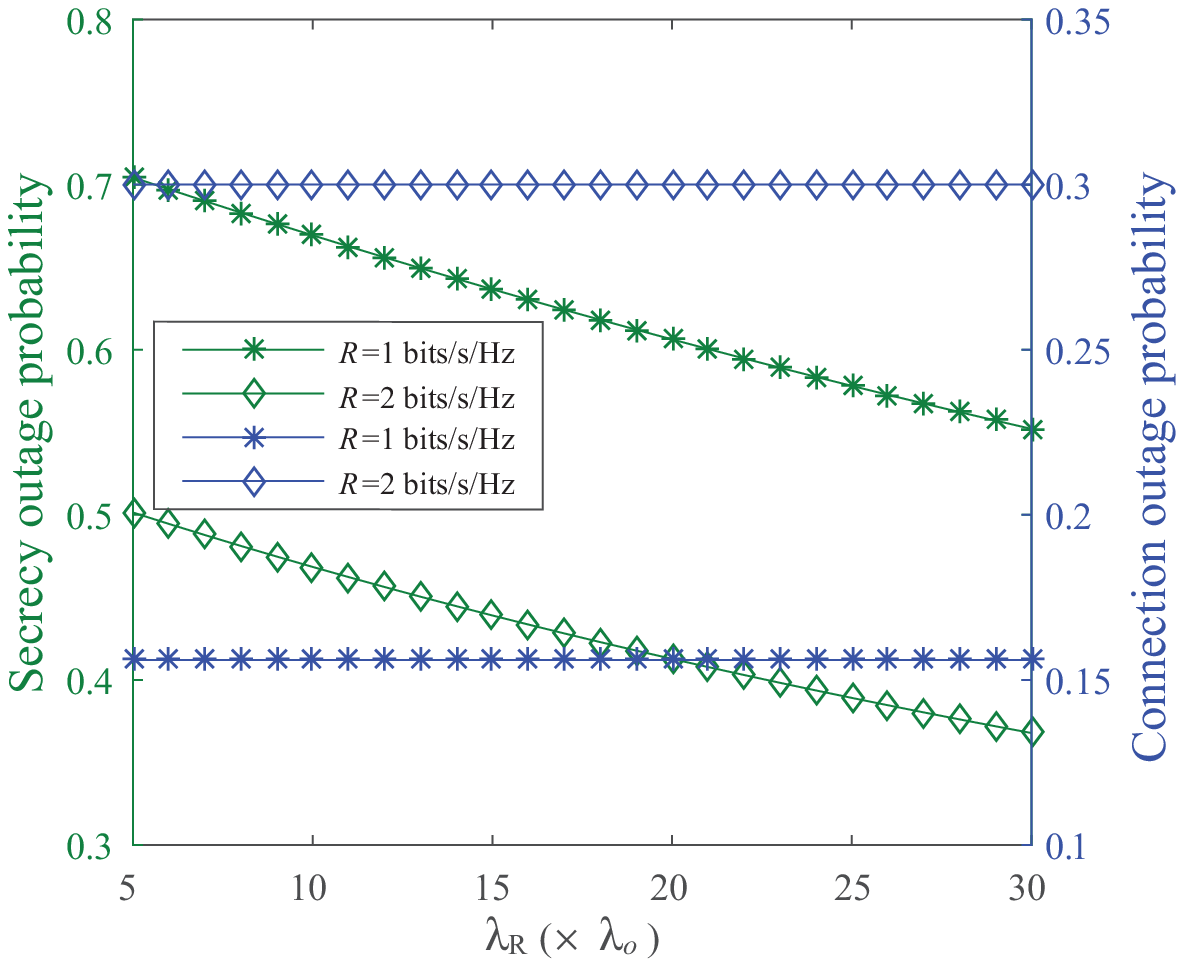}}
\caption{Secrecy outage probability and connection outage probability for RRH transmission in delay-limited mode.}
\label{Fig6}
\end{figure}

Fig.~\ref{Fig5} shows the effects of RRH density on area ergodic secrecy rate. We observe that when more RRHs are deployed, there is a substantial increase in the area ergodic secrecy rate of the RRH tier, as illustrated in \textbf{Remark 2} of Section III-A.  The area ergodic secrecy rate of the MBS tier can also increase with the density of RRH, since users with far-away MBSs will be offloaded to the RRHs. RRH tier can achieve higher area ergodic secrecy rate than the massive MIMO aided MBS tier when the RRHs are denser than the MBSs. In addition, slightly increasing the number of massive MIMO macrocells brings large improvement in the area ergodic secrecy rate of the MBS tier because  more users can be served, and it also improves the area ergodic secrecy rate of the RRH tier due to the fact that users with far-away RRHs will be offloaded to the MBSs.

Fig.~\ref{Fig6} shows the secrecy outage probability and connection outage probability of RRH transmission in delay-limited mode.   Specifically, Fig.~\ref{Fig.sub.61} focuses on the performance when RRH transmissions operate over the RBs only allocated to RRHs, while Fig.~\ref{Fig.sub.62} concentrates on the performance when RRH transmissions operate over the RBs shared by RRHs and MBSs. As stated in   \textbf{Remark 5} of Section III-B, the secrecy outage probability experiences a massive decline when increasing the density of RRHs, due to more severe interference on the Eves but the connection outage probability is unaltered since the inter-RRH interference is mitigated in the C-RAN, as mentioned in \textbf{Remark 4}. Compared with the use of RBs shared by RRHs and MBSs, RRH achieves better performance by using the RBs only used by RRHs, due to the absence of inter-tier interference in these RBs.

Fig.~\ref{Fig7} shows the ergodic capacity $\bar{C}^{e^*}$ of the most malicious eavesdropper's channel  and the ergodic capacity $\bar{C}_\mathrm{R}$ of the C-RAN user's channel  for RRH transmission in delay-tolerant mode. As suggested in \textbf{Remark 2}, deploying more RRHs can significantly decrease $\bar{C}^{e^*}$ and increase $\bar{C}_\mathrm{R}$. For RRH transmission over the RB shared by RRHs and MBSs, interference from the MBS tier has a large negative impact on the performance at the legitimate users, however, its impact on the degradation of the most malicious eavesdropper's channel is limited compared to more interference from dense RRHs.

\begin{figure}
    \begin{center}
        \includegraphics[width=3.3 in]{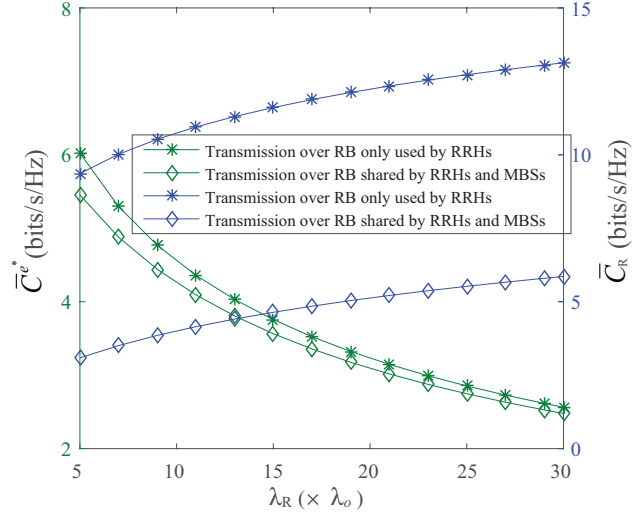}
        \caption{The ergodic capacity $\bar{C}^{e^*}$ of the most malicious eavesdropper's channel  and the ergodic capacity $\bar{C}_\mathrm{R}$ of the C-RAN user's channel  for RRH transmission in delay-tolerant mode: $\lambda_\mathrm{M}=\lambda_o=\left(500^2 \times \pi\right)^{-1} \mathrm{m}^{-2}$, $\lambda_e=10^{-4}$ m$^{-2}$, $N_\mathrm{M}=200$, $S=15$, $\eta_\mathrm{M}=3.5$, and $\eta_\mathrm{R} = 3.2$.}
        \label{Fig7}
    \end{center}
\end{figure}

\begin{figure}
    \begin{center}
        \includegraphics[width=3.3 in]{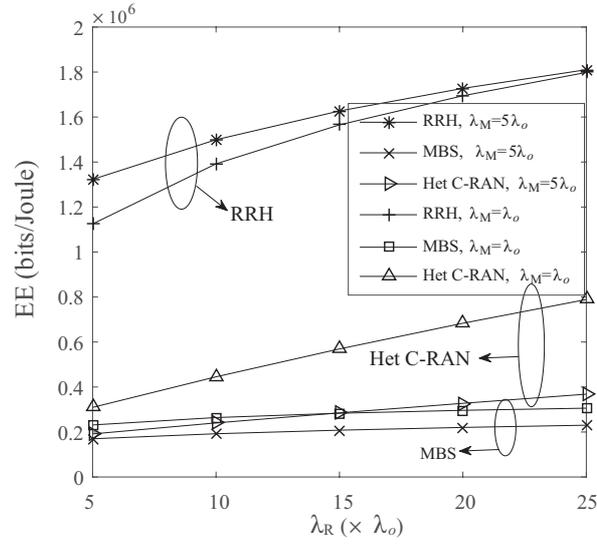}
        \caption{Effects of RRH density on the EE: $\lambda_o=\left(500^2 \times \pi\right)^{-1} \mathrm{m}^{-2}$, $N_\mathrm{M}=400$, $S=30$, $\eta_\mathrm{M}=3.0$, $\eta_\mathrm{R} = 3.6$, and $\alpha=0.7$.}
        \label{Fig8}
    \end{center}
\end{figure}

Fig.~\ref{Fig8} shows the effects of RRH density on the EE. As mentioned in {\textbf{Corollary 2}}  of  Section IV, when increasing the density of RRHs, the EE of RRH transmission is significantly improved. Increasing the density of MBSs improves the EE of RRH transmission but decreases the EE of MBS transmission, since users with far-away RRHs are offloaded to the MBSs. The EE of the Het C-RAN decreases with increasing the density of MBSs, due to the fact that power consumption of the network is significantly boosted by using more massive MIMO MBSs. Since RRHs achieve higher EE, more RRHs should be deployed in the Het C-RAN to enhance the EE.

\subsection{The Effects of S-FFR}
\begin{figure}
    \begin{center}
        \includegraphics[width=4.0 in]{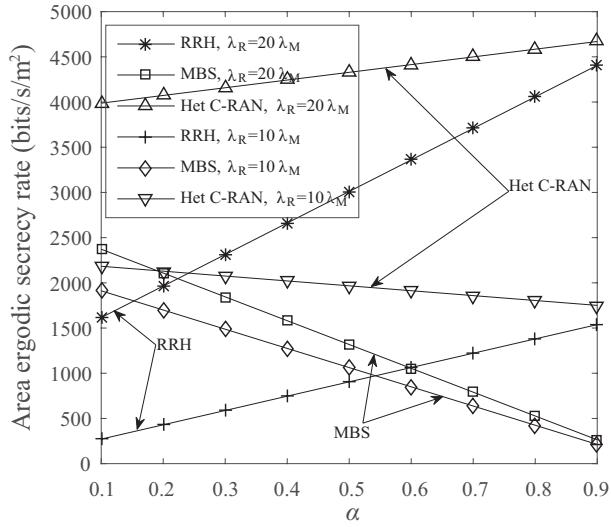}
        \caption{Effects of S-FFR on area ergodic secrecy rate: $\lambda_\mathrm{M}=\left(500^2 \times \pi\right)^{-1} \mathrm{m}^{-2}$, $\lambda_e=5*10^{-5}$ m$^{-2}$, $N_\mathrm{M}=400$, $S=25$, $\eta_\mathrm{M}=3.5$, and $\eta_\mathrm{R} = 3.3$.}
        \label{Fig9}
    \end{center}
\end{figure}
\begin{figure}
    \begin{center}
        \includegraphics[width=3.3 in]{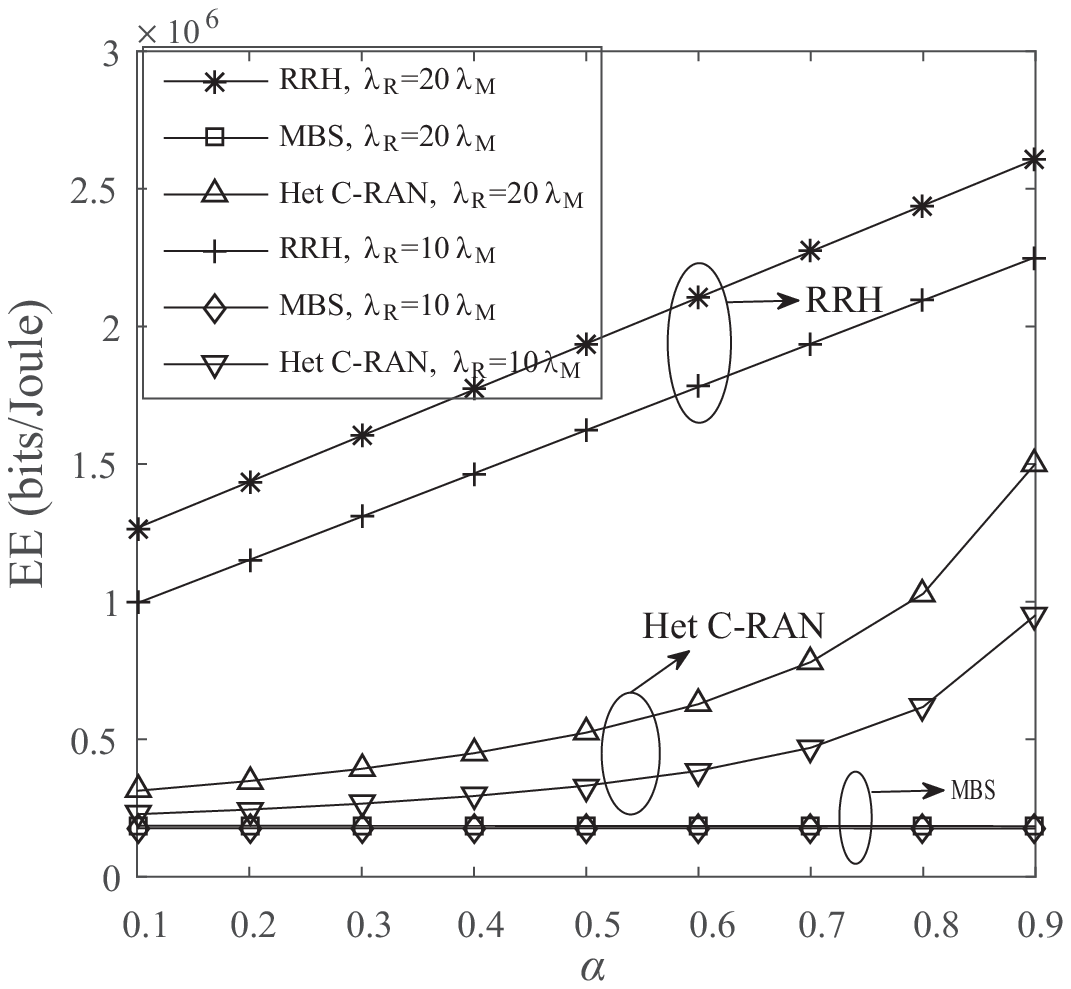}
        \caption{Effects of S-FFR on the EE: $\lambda_\mathrm{M}=\left(500^2 \times \pi\right)^{-1} \mathrm{m}^{-2}$, $\lambda_e=5*10^{-5}$ m$^{-2}$, $N_\mathrm{M}=400$, $S=25$, $\eta_\mathrm{M}=3.5$, and $\eta_\mathrm{R} = 3.3$.}
        \label{Fig10}
    \end{center}
\end{figure}

Results in Fig.~\ref{Fig9} demonstrate the effects of S-FFR on area ergodic secrecy rate. It is obvious that with more RBs allocated to the RRHs, the area ergodic secrecy rate increases for the RRH tier, and decreases for the MBS tier. The RRH tier can achieve higher area ergodic secrecy rate than the MBS tier, when the density of RRHs and the allocated RBs are large. More importantly, it is implied that the effect of S-FFR on the area ergodic secrecy rate of the network can be distinct depending on the RRH density.

Finally, Fig.~\ref{Fig10} provides the effects of S-FFR on the EE. As mentioned in section IV, the EE for RRH transmission is indeed linearly improved by allocating more RBs to the RRHs without the harm of inter-tier interference. S-FFR indeed has little effect on the EE of MBS transmission. Therefore, the EE of the network increases with the RRH density  and RBs only used by RRHs, as shown in this figure.

\section{Conclusions}
In this paper, we investigated the physical layer secrecy and EE in the two-tier massive MIMO aided heterogeneous C-RAN, where massive MIMO empowered macrocell BSs and RRHs coexist. The implementation of S-FFR was utilized to suppress the inter-tier interference. We first studied the impacts of massive MIMO and C-RAN on the secrecy performance in terms of the area ergodic secrecy rate and secrecy outage probability. Then we evaluated the EE in such networks. Our results demonstrated that both C-RAN and massive MIMO can significantly enhance the secrecy performance. The implementation of C-RAN with low power cost RRHs improves EE of the networks substantially.

\section*{Appendix A: A detailed derivation of Theorem 1}
\label{App:theo_1}
\renewcommand{\theequation}{A.\arabic{equation}}
\setcounter{equation}{0}
When using the $k$-th RB allocated to the RRHs, the ergodic capacity of the channel between the typical RRH and its served user is given by
\begin{align}\label{A1_1}
&\bar{C}_{\mathrm{R},k}={\mathbb{E}}\left\{\log_2\left(1+\gamma_{\mathrm{R},k}\right)\right\}\nonumber\\
&=\int_0^\infty {\mathbb{\mathrm{E}}}_{h_{\mathrm{R},k}} \left\{\log_2\left(1+\frac{P_\mathrm{R}\beta}{B_oN_o}h_{\mathrm{R},k} x^{-\eta_\mathrm{R}}\right)\right\} f_{\left| {{X_{{o},\mathrm{R}}}}\right|} \left(x\right) dx.
\end{align}
Considering that $h_{\mathrm{R},k} \sim \rm{exp}(1)$, we further have
\begin{align}\label{A1_2}
\bar{C}_{\mathrm{R},k}&=\frac{1}{{\ln 2}}\int_0^\infty  {\left\{ {\int_{\rm{0}}^\infty  {\frac{1}{{1 + t}}} {e^{ - \frac{{{B_o}{N_o}}}{{{P_{\rm{R}}}\beta }}{x^{{\eta _{\rm{R}}}}}t}}dt} \right\}{f_{\left| {{X_{o,{\rm{R}}}}} \right|}}\left( x \right)dx} \nonumber\\
&= \frac{1}{{\ln 2}}\int_0^\infty  {{e^{\frac{{{B_o}{N_o}}}{{{P_{\rm{R}}}\beta }}{x^{{\eta _{\rm{R}}}}}}}\Gamma \left( {0,\frac{{{B_o}{N_o}}}{{{P_{\rm{R}}}\beta }}{x^{{\eta _{\rm{R}}}}}} \right){f_{\left| {{X_{o,{\rm{R}}}}} \right|}}\left( x \right)dx},
\end{align}
where ${f_{\left| {{X_{o,{\rm{R}}}}} \right|}}\left( x \right)$ is the probability density function (PDF) of the distance between the typical RRH and its intended user, using the similar approach in \cite{HS12},  ${f_{\left| {{X_{o,{\rm{R}}}}} \right|}}\left( x \right)$ is given by
\begin{align}\label{A1_3}
{f_{\left| {{X_{o,{\rm{R}}}}} \right|}}\left( x \right)=\frac{{2\pi {\lambda_\mathrm{R}}}}{{{\mathcal{A}_\mathrm{R}}}}x{e^{ - \pi \left( {{\lambda _\mathrm{R}} + {\lambda _\mathrm{M}}} \right){x^2}}},
\end{align}
where ${\mathcal{A}_\mathrm{R}}=\frac{\lambda _\mathrm{R}}{\lambda _\mathrm{R}+\lambda _\mathrm{M}}$ is the probability that
 a user is associated with the RRH. By plugging \eqref{A1_3} into \eqref{A1_2}, we get \eqref{RRH_Ergodic_rate_1}.

 When using the $\nu$-th RB shared by the RRHs and MBSs, the ergodic capacity of the channel between the typical RRH and its served user is given by
\begin{align}\label{A1_4}
\bar{C}_{\mathrm{R},\nu}&={\mathbb{E}}\left\{\log_2\left(1+\gamma_{\mathrm{R},\nu}\right)\right\}\nonumber\\
&=\frac{1}{{\ln 2}}\int_0^\infty  {{\mathbb{E}}_{\left| {{X_{o,{\rm{R}}}}} \right|=x}\left\{\log_2\left(1+\gamma_{\mathrm{R},\nu}\right)\right\}{f_{\left| {{X_{o,{\rm{R}}}}} \right|}}\left( x \right)dx}\nonumber\\
&=\frac{1}{{\ln 2}}\int_0^\infty  {\left[ {\int_0^\infty  {\frac{{\bar{F}}_{{\gamma_{\mathrm{R},\nu}}\mid \left\{\left| {{X_{o,{\rm{R}}}}} \right|=x\right\}}\left(\gamma\right)}{{1 + \gamma }}d\gamma } } \right]{f_{\left| {{X_{o,{\rm{R}}}}} \right|}}\left( x \right)dx},
\end{align}
where ${\bar{F}}_{{\gamma_{\mathrm{R},\nu}}\mid \left\{\left| {{X_{o,{\rm{R}}}}} \right|=x\right\}}\left(\gamma\right)$ is the CCDF of $\gamma_{\mathrm{R},\nu}$ given a distance $\left| {{X_{o,{\rm{R}}}}} \right|=x$, which is calculated as
\begin{align}\label{A1_5}
&{\bar{F}}_{{\gamma_{\mathrm{R},\nu}}\mid \left\{\left| {{X_{o,{\rm{R}}}}} \right|=x\right\}}\left(\gamma\right)=\Pr \left( {\frac{{{P_{\rm{R}}}{h_{{\rm{R}},\nu }}\beta {x^{ - {\eta _{\rm{R}}}}}}}{{{I_{{\rm{M}},\nu }} + {B_o}{N_o}}} > \gamma } \right)\nonumber\\
&= {e^{ - \frac{{{B_o}{N_o}}}{{{P_{\rm{R}}}\beta }}{x^{{\eta _{\rm{R}}}}}\gamma }}{\mathbb{\mathrm{E}}_{{\Phi _\mathrm{M}}}}\left\{ {{e^{ - \frac{1}{{{P_{\rm{R}}}\beta }}{x^{{\eta _{\rm{R}}}}}\gamma {I_{{\rm{M}},\nu }}}}} \right\}\nonumber\\
&={e^{ - \frac{{{B_o}{N_o}}}{{{P_{\rm{R}}}\beta }}{x^{{\eta _{\rm{R}}}}}\gamma }} \mathcal{L}_{{I_{{\rm{M}},\nu }}}\left({\frac{1}{{{P_{\rm{R}}}\beta }}{x^{{\eta _{\rm{R}}}}}\gamma }\right),
\end{align}
where $\mathcal{L}_{{I_{{\rm{M}},\nu }}}\left(\cdot\right)$ is the laplace transform of the PDF of ${I_{{\rm{M}},\nu }}$, and is  given by
\begin{align}\label{A1_6}
&\mathcal{L}_{{I_{{\rm{M}},\nu }}}\left(s\right)=\mathbb{\mathrm{E}}\left\{\exp\left\{-\left(\sum\limits_{\ell  \in {\Phi_\mathrm{M}}} {\frac{P_\mathrm{M}}{S}{h_{\ell,\nu} }\beta \left| {{X_{\ell ,\mathrm{M}}}} \right|^{-\eta_\mathrm{M}}}\right)s\right\}\right\}\nonumber\\
&\mathop = \limits^{\left(b\right)} \exp\left\{{ - \int_x^\infty  {\left( {1 - \frac{1}{{{{\left( {1 + s\frac{{{P_{\mathrm{M}}}}}
{S}\beta {r^{ - {\eta_\mathrm{M}}}}} \right)}^S}}}} \right){\lambda_\mathrm{M}}2\pi rdr} }\right\}\nonumber\\
&\mathop = \limits^{\left(c\right)} \exp \left( { - {\lambda _\mathrm{M}}2\pi \sum\limits_{\mu  = 1}^S {
S \choose \mu}\int_x^\infty  {\frac{{{{ {{\left( {\frac{{{P_{\rm{M}}}}}{S}\beta } \right)}^\mu }{s^\mu }
\left( {{r^{ - {\eta_\mathrm{M}}}}}
\right)}^\mu }}}{{{{\left( {1 + s\frac{{{P_{\mathrm{M}}}}}{S}\beta {r^{ - {\eta_\mathrm{M}}}}} \right)}^S}}}rdr} } \right)\nonumber\\
&=\exp \bigg\{   - {\lambda _\mathrm{M}}2\pi \sum\limits_{\mu  = 1}^S
{{
S\choose
\mu
}{{\left( {s  \frac{{ {P_{\mathrm{M}}}}}{{S}}}\beta \right)}^\mu }}  \frac{{{{\left( { - s \frac{{
{P_{\mathrm{M}}}}}{{S}}}  \beta \right)}^{ - \mu  + \frac{2}{{{\eta _\mathrm{M}}}}}}}}{{{\eta_\mathrm{M}}}}\nonumber\\
&\;\;\; {B_{\left(- {s \frac{{ {P_{\mathrm{M}}}}}{{S}} \beta
x^{-\eta _\mathrm{M}}} \right)}}\left[
{\mu  - \frac{2}{{{\eta_\mathrm{M}}}},1 - S} \right]\bigg\},
\end{align}
where (b) is obtained by using the generating functional of PPP~\cite{M_Haenggi2013},  (c) results from using Binomial expansion, $\mathrm{B}_{\left(\cdot\right)}\left[\cdot,\cdot\right]$ is the incomplete beta function~\cite[(8.391)]{gradshteyn}. By pulling \eqref{A1_6} and \eqref{A1_5} together, we get \eqref{CDF_R_secrecy_rate}. Substituting \eqref{A1_3} into \eqref{A1_4}, we also get \eqref{RRH_Ergodic_rate_2}.

\section*{Appendix B: A detailed derivation of Theorem 2}
\label{App:theo_1}
\renewcommand{\theequation}{B.\arabic{equation}}
\setcounter{equation}{0}

The ergodic capacity $\bar{C}_{\mathrm{R},i}^{e^*}$ $(i\in\left\{k,\nu\right\})$ of the most malicious eavesdropper's channel is written as
\begin{align}\label{B1_1}
\bar{C}_{\mathrm{R},i}^{e^*}&={\mathbb{E}}\left\{\log_2\left(1+\gamma_{\mathrm{R},i}^{e^*}\right) \right\} \nonumber\\
&=\frac{1}{{\ln 2}}\int_0^\infty \frac{1-F_{\gamma_{\mathrm{R},i}^{e^*}}\left(x\right)}{1+x} dx,
\end{align}
where $F_{\gamma_{\mathrm{R},i}^{e^*}}\left(x\right)$ denotes the cumulative distribution function (CDF) of ${\gamma_{\mathrm{R},i}^{e^*}}$.

Based on \eqref{gamma_e_RRH_k_SINR}, the CDF of ${\gamma_{\mathrm{R},k}^{e^*}}$ is calculated as
 \begin{align}\label{CDF_e_RRH_k_SINR}
 & F_{\gamma_{\mathrm{R},k}^{e^*}}\left(x\right)=\Pr\left( \gamma_{\mathrm{R},k}^{e^*}< x \right) \nonumber\\
 &=\Pr\left(\mathop {\max }\limits_{e \in \Phi_e}\left\{\frac{{{P_\mathrm{R}}h_{\mathrm{R}{\rm{,}}k}^e\beta {{\left| {X_{o{\rm{,}}\mathrm{R}}^e} \right|}^{ - {\eta _\mathrm{R}}}}}}{{I_{\mathrm{R},k}^e + {B_o}{N_e}}}\right\}< x \right)\nonumber\\
&=\mathbb{E}_{\Phi_e} \bigg\{ \prod\limits_{e \in {\Phi_e}} {\Pr\left(\frac{{{P_\mathrm{R}}h_{\mathrm{R}{\rm{,}}k}^e\beta {{\left| {X_{o{\rm{,}}\mathrm{R}}^e} \right|}^{ - {\eta _\mathrm{R}}}}}}{{ I_{\mathrm{R},k}^e + {B_o}{N_e}}}<x  \left.\right|  \Phi_e     \right) } \bigg\}.
 \end{align}
Using the generating functional of the PPP $\Phi_e$, $F_{\gamma_{\mathrm{R},k}^{e^*}}\left(x\right)$ can be further derived as
\begin{align}\label{CDF_Eve_Gamma_App_B}
& F_{\gamma_{\mathrm{R},k}^{e^*}}\left(x\right) =\exp\bigg\{-{\lambda _e}\int_{{R^{\rm{2}}}}\Big(1-{\Pr\Big(\frac{{{P_\mathrm{R}}h_{\mathrm{R}{\rm{,}}k}^e\beta {{r}^{ - {\eta _\mathrm{R}}}}}}{{ I_{\mathrm{R},k}^e + {B_o}{N_e}}}<x  \Big) }\Big)   {dr} \Big\}\nonumber\\
&=\exp\bigg\{-{\lambda _e}\int_{{R^{\rm{2}}}}\mathbb{E}_{\Phi_{\mathrm{R}}} \bigg\{ \mathbb{E}_{\Phi_\mathrm{M}} \bigg\{ \nonumber\\
&\quad\quad\quad\quad\quad\;\;\exp\big[-\frac{{r}^{ {\eta _\mathrm{R}}}x}{{P_\mathrm{R}}\beta}\left( I_{\mathrm{R},k}^e + {B_o}{N_e}\right)\big]\bigg\}\bigg\}dr\bigg\}\nonumber\\
&\mathop = \limits^{\left(a\right)}\exp\bigg\{-2\pi{\lambda _e}\int_0^\infty  \exp\big[-\frac{{r}^{ {\eta _\mathrm{R}}}x}{{P_\mathrm{R}}\beta}{B_o}{N_e}\big] \mathcal{L}_{I_{\mathrm{R},k}^e}\big(\frac{{r}^{ {\eta _\mathrm{R}}}x}{{P_\mathrm{R}}\beta}\big) r dr\bigg\},
\end{align}
where (a) results from using the polar-coordinate system, $\mathcal{L}_{I_{\mathrm{R},k}^e}\left(\cdot\right)$ is the laplace transform of the PDF of ${I_{\mathrm{R},k}^e}$.

Likewise, the CDF of ${\gamma_{\mathrm{R},\nu}^{e^*}}$ is calculated as
\begin{align}\label{CDF_Eve_Gamma}
& F_{\gamma_{\mathrm{R},\nu}^{e^*}}\left(x\right)=\Pr\left( \gamma_{\mathrm{R},\nu}^{e^*}< x \right) \nonumber\\
&=\Pr\left(\mathop {\max }\limits_{e \in \Phi_e}\left\{\frac{{{P_\mathrm{R}}h_{\mathrm{R}{\rm{,}}\nu}^e\beta {{\left| {X_{o{\rm{,}}\mathrm{R}}^e} \right|}^{ - {\eta _\mathrm{R}}}}}}{{I_{\mathrm{R},\nu}^e +I_{\mathrm{M},\nu}^e + {B_o}{N_e}}}\right\}< x \right)\nonumber\\
&= \mathbb{E}_{\Phi_e} \bigg\{ \prod\limits_{e \in {\Phi_e}} {\Pr\left(\frac{{{P_\mathrm{R}}h_{\mathrm{R}{\rm{,}}\nu}^e\beta {{\left| {X_{o{\rm{,}}\mathrm{R}}^e} \right|}^{ - {\eta _\mathrm{R}}}}}}{{I_{\mathrm{M},\nu}^e + I_{\mathrm{R},\nu}^e + {B_o}{N_e}}}<x  \left.\right|  \Phi_e     \right) } \bigg\} \nonumber\\
&=\exp\bigg\{-2\pi{\lambda _e}\int_0^\infty  \exp\big[-\frac{{r}^{ {\eta _\mathrm{R}}}x}{{P_\mathrm{R}}\beta}{B_o}{N_e}\big] \nonumber\\
&\quad\quad\quad\quad\quad\quad\quad\quad\mathcal{L}_{I_{\mathrm{R},\nu}^e}\big(\frac{{r}^{ {\eta _\mathrm{R}}}x}{{P_\mathrm{R}}\beta}\big) \mathcal{L}_{I_{\mathrm{M},\nu}^e}\big(\frac{{r}^{ {\eta _\mathrm{R}}}x}{{P_\mathrm{R}}\beta}\big) r dr\bigg\},
\end{align}
where  $\mathcal{L}_{I_{\mathrm{R},\nu}^e}\left(\cdot\right)$ and $\mathcal{L}_{I_{\mathrm{M},\nu}^e}\left(\cdot\right)$ are the laplace transforms of the PDFs of ${I_{\mathrm{R},\nu}^e}$ and ${I_{\mathrm{M},\nu}^e}$, respectively.

By using the Slivnyak's theorem and the generating functional of the PPP $\Phi_R$, $\mathcal{L}_{I_{\mathrm{R},i}^e}\left(\cdot\right)$ $(i\in\left\{k,\nu\right\})$ is given by
\begin{align}\label{L_I_R_RRH}
&\mathcal{L}_{I_{\mathrm{R},i}^e}\left(s\right)=\mathbb{E}\left\{\exp\left(-s I_{\mathrm{R},i}^e\right) \right\} \nonumber\\
&=\exp\left(-2\pi \lambda_\mathrm{R}\int_0^\infty \left(1-\frac{1}{\left(1+s P_\mathrm{R} \beta r^{-\eta_\mathrm{R}} \right)}\right)r dr    \right)\nonumber\\
&=\exp\left(-\lambda_\mathrm{R}\pi \left(P_\mathrm{R}\beta\right)^{\frac{2}{\eta _\mathrm{R}}}\Gamma\left(1+\frac{2}{\eta _\mathrm{R}}\right) \Gamma\left(1-\frac{2}{\eta _\mathrm{R}}\right) s^{\frac{2}{\eta _\mathrm{R}}} \right).
\end{align}
Similarly, ${I_{\mathrm{M},\nu}^e}$ is given by
\begin{align}\label{Lap_I_M_e}
\hspace{-0.5 cm}\mathcal{L}_{I_{\mathrm{M},\nu}^e}\left(s\right)&=\exp\left[-2\pi \lambda_\mathrm{M}\int_0^\infty \left(1-\frac{1}{\left(1+s\frac{P_\mathrm{M}}{S} \beta r^{-\eta_\mathrm{M}} \right)^S}\right)r dr    \right] \nonumber\\
&=\exp\bigg[-2\pi \lambda_\mathrm{M} \sum\limits_{\mu  = 1}^S
{S\choose \mu} \left(s\frac{P_\mathrm{M}}{S}\beta\right)^{\frac{2}{\eta_\mathrm{M}}}\nonumber\\
&\quad\quad\quad \quad \frac{\Gamma\left(\mu -\frac{2}{\eta_\mathrm{M}}\right) \Gamma\left(-\mu +\frac{2}{\eta_\mathrm{M}}+S\right)}{\eta_\mathrm{M}\Gamma\left(S\right)}\bigg].
\end{align}
Substituting \eqref{L_I_R_RRH} into \eqref{CDF_Eve_Gamma_App_B}, we get $F_{\gamma_{\mathrm{R},k}^{e^*}}\left(\cdot\right)$ as \eqref{X3_34_overline_F_k}. Then, substituting \eqref{L_I_R_RRH} and \eqref{Lap_I_M_e} into \eqref{CDF_Eve_Gamma}, we get $F_{\gamma_{\mathrm{R},\nu}^{e^*}}\left(\cdot\right)$ as \eqref{X3_34_overline_F}.

\section*{Appendix C: A detailed derivation of Theorem 3}
\label{App:theo_1}
\renewcommand{\theequation}{C.\arabic{equation}}
\setcounter{equation}{0}
The ergodic capacity of the channel between the typical MBS and its served user is written as
\begin{align}\label{C1_1}
\bar{C}_{\mathrm{M},\nu}=\mathbb{E}\left\{\log_2\left(1+\gamma_{\mathrm{M},\nu}\right)\right\}.
\end{align}
By using Jensen's inequality,  a tight lower bound for $\bar{C}_{\mathrm{M},\nu}$ is given by~\cite{Lifeng_massiveMIMO}
\begin{align}\label{C1_2}
\bar{C}_{\mathrm{M},\nu}^{\mathrm{L}}=\log_2\left(1+e^{Z_3
+Z_4}\right),
\end{align}
where
\begin{align}\label{C1_3}
Z_3=\mathbb{\mathrm{E}}\left\{\ln\left(\frac{P_\mathrm{M}}{S}g_{\mathrm{M},\nu}\beta \left| {{X_{{o},\mathrm{M}}}} \right|
^{-\eta_\mathrm{M}}\right)\right\},
\end{align}
and
\begin{align}\label{C1_4}
Z_4=\mathbb{\mathrm{E}}\left\{\ln\left(\frac{1}{J_{\mathrm{M},\nu}+J_{\mathrm{R},\nu}+B_o N_1}\right)\right\}.
\end{align}

We first calculate $Z_3$ as
\begin{align}\label{Z_1_theo2}
Z_3=\ln \left( {\frac{{{P_{\rm{M}}}}}{S}\beta } \right){\rm{ + }}\mathbb{\mathrm{E}}\left\{ {\ln \left( {{g_{{\rm{M}},\nu }}} \right)} \right\} - {\eta _{\rm{M}}}\mathbb{\mathrm{E}}\left\{ {\ln \left( {\left| {{X_{o,{\rm{M}}}}} \right|} \right)} \right\}.
\end{align}
Considering that $g_{\mathrm{M},\nu} \sim \Gamma\left(N_\mathrm{M}-S+1,1\right)$, $\mathbb{\mathrm{E}}\left\{ {\ln \left( {{g_{{\rm{M}},\nu }}} \right)} \right\}$ is given by
\begin{align}\label{Exp_ln_Z1}
\mathbb{\mathrm{E}}\left\{ {\ln \left( {{g_{{\rm{M}},\nu }}} \right)} \right\} &= \int_0^\infty  {\frac{{{x^{N_\mathrm{M} - S}}{e^{ - x}}}}{{\left( {N_\mathrm{M} - S} \right)!}}\ln \left( x \right)dx} \nonumber\\
& \mathop = \limits^{\left(a\right)} \psi\left(N_\mathrm{M}-S+1\right),
\end{align}
where (a) results from using $\int_0^\infty  {{x^{v - 1}}{e^{ - \mu x}}\ln xdx}  = {\mu ^{ - v}}\Gamma \left( v \right)\left( {\psi \left( v \right) - \ln \mu } \right)$~\cite[(4.352.1)]{gradshteyn}. Then, $\mathbb{\mathrm{E}}\left\{ {\ln \left( {\left| {{X_{o,{\rm{M}}}}} \right|} \right)} \right\}$ is derived as
\begin{align}\label{MBS_Dis_Z1}
\mathbb{\mathrm{E}}\left\{ {\ln \left( {\left| {{X_{o,{\rm{M}}}}} \right|} \right)} \right\}&\mathop = \limits^{\left(b\right)}  \int_0^\infty  {\ln \left( x \right){f_{\left| {{X_{o,{\rm{M}}}}} \right|}}\left( x \right)dx} \nonumber\\
&= \int_0^\infty  {\ln \left( x \right) \frac{2 \pi \lambda _\mathrm{M} }{{\mathcal{A}_\mathrm{M}}}x{e^{ - \pi \left( {{\lambda _\mathrm{R}} + {\lambda _\mathrm{M}}} \right){x^2}}} dx}\nonumber\\
&= \frac{1}{2}\left( {\psi \left( 1 \right) - \ln \left( {\pi \left( {{\lambda _{\rm{R}}} + {\lambda _{\rm{M}}}} \right)} \right)} \right).
\end{align}
In (b) above,  ${f_{\left| {{X_{o,{\rm{M}}}}} \right|}}\left( x \right)$ is the PDF of the distance between the typical MBS and its intended user, which can be directly  obtained following \eqref{A1_3}, and ${\mathcal{A}_\mathrm{M}}=\frac{\lambda _\mathrm{M}}{\lambda _\mathrm{R}+\lambda _\mathrm{M}}$ is the probability that
 a user is associated with the MBS. By substituting \eqref{Exp_ln_Z1} and \eqref{MBS_Dis_Z1} into \eqref{Z_1_theo2}, we obtain $Z_3$ as
 \begin{align}\label{Z_3_exp}
 Z_3=&\ln \left( {\frac{{{P_{\rm{M}}}}}{S}\beta } \right)+ \psi\left(N_\mathrm{M}-S+1\right)\nonumber\\
 &-\frac{\eta _{\rm{M}}}{2}\left( \psi \left( 1 \right)
 - \ln \left( {\pi \left( {{\lambda _{\rm{R}}} + {\lambda _{\rm{M}}}} \right)} \right) \right).
 \end{align}

From \eqref{C1_4},  considering the convexity of $\ln\left(\frac{1}{1+x}\right)$ and using Jensen's inequality, we derive the lower bound on the $Z_4$ as
\begin{align}\label{Z_2_step}
Z_4\geq\bar{Z}_4=\ln\left(\frac{1}{\mathbb{\mathrm{E}}\left\{J_{\mathrm{M},\nu}\right\}+\mathbb{\mathrm{E}}\left\{J_{\mathrm{R},\nu}\right\}+B_o N_1}\right).
\end{align}
Then, we have
\begin{align}\label{Exp_I_M}
\mathbb{\mathrm{E}}\left\{J_{\mathrm{M},\nu}\right\}= &\int_0^\infty \mathbb{\mathrm{E}}\left\{\sum\limits_{\ell  \in {\Phi_\mathrm{M}}/{o}} {\frac{P_\mathrm{M}}{S}{g_{\ell,\nu}}\beta \left| {{X_{\ell ,\mathrm{M}}}} \right|^{-\eta_\mathrm{M}}}\right\} {f_{\left| {{X_{o,{\rm{M}}}}} \right|}}\left( x \right)dx\nonumber\\
\mathop = \limits^{\left(c\right)}& \int_0^\infty \left({P_\mathrm{M}}\beta 2\pi {\lambda _\mathrm{M}} \int_x^\infty r^{1-\eta_\mathrm{M}}  dr \right) {f_{\left| {{X_{o,{\rm{M}}}}} \right|}}\left( x \right)dx\nonumber\\
=&{\frac{{{P_\mathrm{M}}\beta 2\pi {\lambda _\mathrm{M}}} \Gamma\left(2-\frac{\eta_\mathrm{M}}{2}\right) }{{\left({\eta_\mathrm{M}} - 2\right)\left(\pi\lambda _\mathrm{M}+\pi\lambda _\mathrm{R}\right)^{1-\frac{\eta_\mathrm{M}}{2}}}}},
\end{align}
where (c) results from using Campbell's theorem~\cite{Baccelli2009}. Likewise, $\mathbb{\mathrm{E}}\left\{J_{\mathrm{R},\nu}\right\}$ is calculated as
\begin{align}\label{Exp_I}
\mathbb{\mathrm{E}}\left\{J_{\mathrm{R},\nu}\right\}=& \int_0^\infty \mathbb{\mathrm{E}}\left\{\sum\limits_{j  \in {\Phi_\mathrm{R}}} {{P_\mathrm{R}}{g_{j,\nu} }\beta\left| {{X_{j ,\mathrm{R}}}} \right|^{-\eta_\mathrm{R}}}\right\} {f_{\left| {{X_{o,{\rm{M}}}}} \right|}}\left( x \right)dx\nonumber\\
=&{\frac{{{P_\mathrm{R}}\beta 2\pi {\lambda _\mathrm{R}}} \Gamma\left(2-\frac{\eta_\mathrm{R}}{2}\right) }{{\left({\eta_\mathrm{R}} - 2\right)\left(\pi\lambda _\mathrm{M}+\pi\lambda _\mathrm{R}\right)^{1-\frac{\eta_\mathrm{R}}{2}}}}}.
\end{align}
 Substituting \eqref{Z_3_exp} and \eqref{Z_2_step} into \eqref{C1_2}, we obtain \eqref{MBS_EC_nu}.

\section*{Appendix D: A detailed derivation of Corollary 1}
\label{App:theo_1}
\renewcommand{\theequation}{D.\arabic{equation}}
\setcounter{equation}{0}

When the connection outage constraint $P_{\mathrm{R},k}^{\mathrm{co}}\left(R\right)=\sigma$, using \eqref{RRH_COP_1}, we can easily get \eqref{DL_RRH_1}.

For connection outage constraint on the RRH transmission over the $\nu$-th RB shared by RRHs and MBSs, namely $P_{\mathrm{R},\nu}^{\mathrm{co}}\left(R\right)=\sigma$, we have
\begin{align}\label{D1_1}
{\bar{F}}_{{\gamma_{\mathrm{R},\nu}}\mid \left\{\left| {{X_{o,{\rm{R}}}}} \right|=d_o\right\}}\left(2^{R}-1\right)=1-\sigma.
\end{align}
Since the noise can be ignored compared with the inter-tier interference from MBSs, based on \eqref{A1_5} and \eqref{A1_6}, we consider the worse case that interferers are located everywhere in the plane and  derive the lower bound for ${\bar{F}}_{{\gamma_{\mathrm{R},\nu}}\mid \left\{\left| {{X_{o,{\rm{R}}}}} \right|=d_o\right\}}\left(\cdot\right)$ as
\begin{align}\label{C1_Eq4}
&{\bar{F}}_{{\gamma_{\mathrm{R},\nu}}\mid \left\{\left| {{X_{o,{\rm{R}}}}} \right|
=d_o\right\}}^{\mathrm{L}}\left(\gamma\right)\nonumber\\
&=\exp\left\{{ - \int_0^\infty  {\bigg( {1 - \frac{1}{{{{\left( {1 + \frac{{{P_{\mathrm{M}}}d_o^{\eta_\mathrm{R}} \gamma}}
{{P_\mathrm{R}} S} {r^{ - {\eta_\mathrm{M}}}}} \right)}^S}}}} \bigg){\lambda_\mathrm{M}}2\pi rdr} }\right\}\nonumber\\
&=\exp \bigg(  - 2\pi {\lambda _\mathrm{M}}{{\left( {\frac{{{P_\mathrm{M}}{d_o^{{\eta _\mathrm{R}}}}}}{{{P_\mathrm{R}}S}}\gamma } \right)}^{\frac{2}{{{\eta _\mathrm{M}}}}}}\sum\limits_{\mu  = 1}^S {
S\choose
\mu} \nonumber\\
&\quad \quad\quad\quad \frac{{\Gamma \left( {\mu  - \frac{2}{{{\eta _\mathrm{M}}}}} \right)\Gamma \left( { - \mu  + \frac{2}{{{\eta _\mathrm{M}}}} + S} \right)}}{{{\eta _\mathrm{M}}\Gamma \left( S \right)}}  \bigg).
\end{align}
Substituting \eqref{C1_Eq4} into \eqref{D1_1}, after some manipulations, we obtain \eqref{coro_RRH_rate2}.

\end{document}